\newtheorem{theorem}{Theorem}
  \providecommand\BibTeX{{%
    \normalfont B\kern-0.5em{\scshape i\kern-0.25em b}\kern-0.8em\TeX}}}
\begin{document}

%%
%% The "title" command has an optional parameter,
%% allowing the author to define a "short title" to be used in page headers.
\title{Towards Robust Recommendation via Decision Boundary-aware Graph Contrastive Learning}
% 1. Towards More Robust Recommendation via Dynamic Decision Boundary-aware Graph Contrastive Learning

%%
%% The "author" command and its associated commands are used to define
%% the authors and their affiliations.
%% Of note is the shared affiliation of the first two authors, and the
%% "authornote" and "authornotemark" commands
%% used to denote shared contribution to the research.
\author{Jiakai Tang}
\author{Sunhao Dai}
\author{Zexu Sun}
\affiliation{%
  \institution{\mbox{Gaoling School of Artificial Intelligence}, \mbox{Renmin University of China}}
  \city{Beijing}
  \country{China}
}
\email{tangjiakai5704@ruc.edu.cn}
\email{{sunhaodai,sunzexu21}@ruc.edu.cn}

\author{Xu Chen$^{*}$}
\author{Jun Xu}
\affiliation{%
  \institution{\mbox{Gaoling School of Artificial Intelligence}, \mbox{Renmin University of China}}
  \city{Beijing}
  \country{China}
}
\email{{xu.chen,junxu}@ruc.edu.cn}

\author{Wenhui Yu}
\author{Lantao Hu}
\author{Peng Jiang}
\author{Han Li}
\affiliation{%
  \institution{Kuaishou Technology}
  \city{Beijing}
  \country{China}
}
\email{{yuwenhui07, hulantao}@kuaishou.com}
\email{{jiangpeng, lihan08}@kuaishou.com}

\thanks{* Corresponding author}

\renewcommand{\authors}{Jiakai Tang, Sunhao Dai, Zexu Sun, Xu Chen, Jun Xu, Wenhui Yu, Lantao Hu, Peng Jiang, Han Li}
\renewcommand{\shortauthors}{Jiakai Tang et al.}

%%
%% By default, the full list of authors will be used in the page
%% headers. Often, this list is too long, and will overlap
%% other information printed in the page headers. This command allows
%% the author to define a more concise list
%% of authors' names for this purpose.
% \renewcommand{\shortauthors}{Trovato and Tobin, et al.}

%%
%% The abstract is a short summary of the work to be presented in the
%% article.
\begin{abstract}
In recent years, graph contrastive learning (GCL) has received increasing attention in recommender systems due to its effectiveness in reducing bias caused by data sparsity. However, most existing GCL models rely on heuristic approaches and usually assume entity independence when constructing contrastive views. We argue that these methods struggle to strike a balance between semantic invariance and view hardness across the dynamic training process, both of which are critical factors in graph contrastive learning.

To address the above issues, we propose a novel GCL-based recommendation framework RGCL, which effectively maintains the semantic invariance of contrastive pairs and dynamically adapts as the model capability evolves through the training process.
Specifically, RGCL first introduces decision boundary-aware adversarial perturbations to constrain the exploration space of contrastive augmented views, avoiding the decrease of task-specific information. 
Furthermore, to incorporate global user-user and item-item collaboration relationships for guiding on the generation of hard contrastive views, we propose an adversarial-contrastive learning objective to construct a relation-aware view-generator.  
Besides, considering that unsupervised GCL could potentially narrower margins between data points and the decision boundary, resulting in decreased model robustness,
we introduce the adversarial examples based on maximum perturbations to achieve margin maximization.
We also provide theoretical analyses on the effectiveness of our designs.
Through extensive experiments on five public datasets, we demonstrate the superiority of RGCL compared against twelve baseline models. To benefit the research community, we have released our project at \url{https://tangjiakai.github.io/RGCL/}.

\end{abstract}

%%
%% The code below is generated by the tool at http://dl.acm.org/ccs.cfm.
%% Please copy and paste the code instead of the example below.
%%
\begin{CCSXML}

<ccs2012>

   <concept>

       <concept_id>10002951.10003317.10003347.10003350</concept_id>

       <concept_desc>Information systems~Recommender systems</concept_desc>

       <concept_significance>500</concept_significance>

       </concept>

 </ccs2012>

\end{CCSXML}
\ccsdesc[500]{Information systems~Recommender systems}

\keywords{Recommender Robustness; Graph Contrastive Learning; Adversarial Learning}

%%
%% This command processes the author and affiliation and title
%% information and builds the first part of the formatted document.
\maketitle

\section{Introduction}\label{intro}
\begin{figure}[t]
\centering
\includegraphics[width=\linewidth]{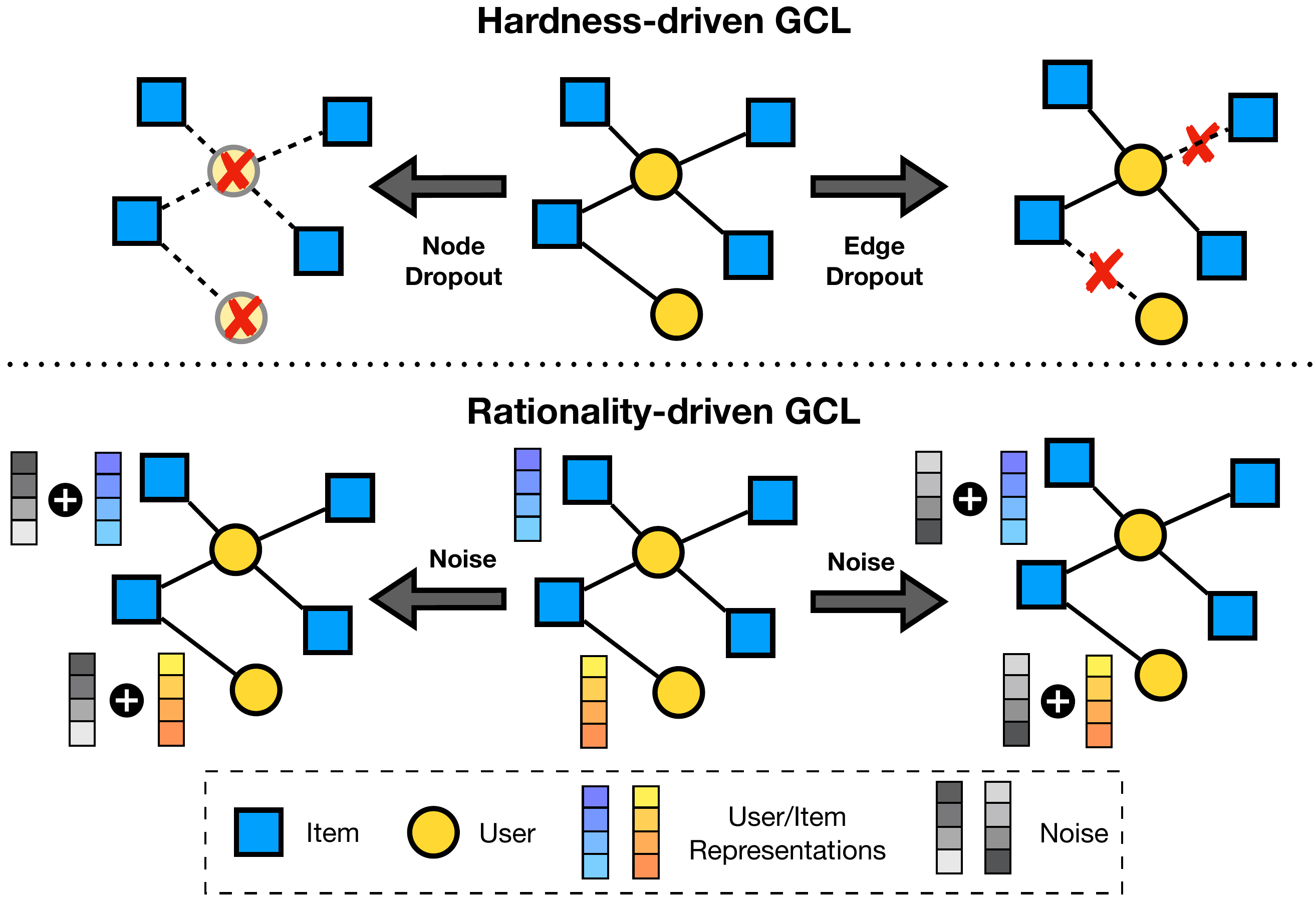} 
\caption{An overview of two types of representative GCL-based recommenders. To facilitate the presentation, we only show a single user and item with injected noise. However, in practice, the semantic-aware GCL-based methods should integrate perturbations to all graph nodes.} 
\label{fig:intro}
% \vspace{-0.5cm}
\end{figure}

Recently, the intersection of graph neural networks (GNNs) and recommender systems has emerged as a focal point of research attention in both academia and industry~\cite{jin2023dual}. 
While GNNs have demonstrated remarkable efficacy in capturing high-order connectivity relationships between users and items through their potent message propagation mechanism~\cite{jiao2023pglbox,wu2022graph}, the inherent data sparsity within recommendation scenarios introduces unexpected bias in users (e.g., non-active vs. active users) and items (e.g., long-tail vs. popular items) representations, thereby impairing the overall model performance~\cite{ cai2023lightgcl, lin2022improving}.

To mitigate the issue of data sparsity and drawing inspiration from self-supervised learning (SSL), recent works have introduced Graph Contrastive Learning (GCL) into GNN-based algorithms~\cite{lu2023vrkg4rec, zhang2023denoising, wei2022contrastive}. 
GCL represents a new learning paradigm that integrates contrastive learning~\cite{jaiswal2020survey} with GNN-based recommenders, simultaneously enhancing the alignment of positive embedding pairs and minimizing the similarity to augmented negative instances. In this way, GCL can effectively alleviate the problem of representation degradation among low-degree nodes. 
In general, GCL-based recommenders can be classified into two categories based on how to build the contrastive samples:
(1) \textbf{Hardness-driven methods}.
These methods basically aim to construct hard enough samples to challenge original recommender models and provide more difficult knowledge to widen the model vision.
The methods in this line mainly differentiate themselves by how to define the hardness and how to build hard enough samples. 
For example, SGL~\cite{wu2021self} generates challenging views using various strategies, such as node dropout and edge dropout.
(2) \textbf{Rationality-driven methods}.
These methods aim to maintain the rationality of the constructed samples, that is, the augmented features and original labels should form reasonable samples. For example, SimGCL~\cite{yu2022graph} makes slight changes to the original features, such that the augmented feature-label pairs can be still reasonable (\emph{i.e.}, semantically invariant).

Although the aforementioned GCL-based recommenders have shown impressive performance to some extent, we argue that these methods still suffer from several significant limitations.
As depicted in Figure~\ref{fig:intro}, \textbf{on the one hand}, hardness-driven models blindly pursue the example hardness in contrastive augmentations through manual-designed heuristic strategies. 
Unfortunately, these models may inadvertently remove certain crucial nodes or edges, neglecting how to maintain task-specific semantics.
This oversight makes it challenging for recommenders to accurately capture user preferences and item characteristics.
\textbf{On the other hand}, rationality-driven methods introduce slight feature perturbations to retain the underlying semantic structure but may overlook the benefits of introducing hard samples on providing more diverse knowledge.

Notably, both challenging positive pairs and hard negative pairs are essential to the  success of GCL-based recommenders~\cite{robinson2020contrastive, wang2021understanding}. In extreme cases, the zero-noise version of contrastive learning may not yield significant performance gains, as verified by prior research~\cite{xia2022simgrace, yu2022graph}.
In summary, achieving an adaptive and ideal balance between the hardness and rationality of contrastive augmentations for GCL-based recommenders poses a highly intricate challenge.

In this work, we aim to leverage the idea of \textit{adversarial robustness}~\cite{moosavi2016deepfool} to facilitate the construction of optimal contrastive augmented data. To be specific, the goal of adversarial robustness is to promote feature invariance upon task-relevant information, assuring the neural networks are not fooled by imperceptible data perturbations. More importantly, it specifies the maximum perturbation boundary that the current model can tolerate, which explicitly defines a feasible exploration space for conducting example augmentation. Therefore, grounded by such
idea, the graph contrastive learning can effectively balance the example hardness and rationality, both of which are crucial factors to high-quality representations.
While this idea is inherently intuitive and holds intriguing potential, its implementation still faces several challenges and obstacles.
\textbf{C1}: prevalent contrastive augmentation approaches, assuming entity independence, struggle to maintain inherent structural features as they overlook the important connections among user-user and item-item.
\textbf{C2}:  
as an unsupervised learning algorithm, GCL in blindly pursuing representation uniformity might unintentionally compromise the robust requirement, that is, narrow margins between data points and the model decision boundary, risking unexpected decreases in the model robustness.

To realize our idea and overcome the above challenges, this paper proposes a novel \textbf{R}obust \textbf{G}raph \textbf{C}ontrastive \textbf{L}earning-based recommendation framework, named RGCL.
Specifically, we first calculate the maximum perturbation magnitudes for different users and items at each graph layer, while preserving core semantic information for both user and item sides. (\textbf{Rationality}) 
Compared to manual-designed heuristics graph contrastive learning methods, we propose an adversarial-contrastive objective to adaptively generate challenging positive pairs and hard negative pairs based on the global relationships between user-user and item-item, (\textbf{Hardness}) which simultaneously overcomes the limitations of the entity independence assumption. (\textbf{C1}) 
At last, we optimize the joint loss of adversarial and contrastive components to concurrently increase the dissimilarity between different users (items) and maximize the distances between user-item inputs and model decision boundary, further improving the robustness of the recommendation model. (\textbf{C2})
In summary, our contributions can be summarized as follows:
\begin{itemize}[noitemsep, nolistsep, label=$\bullet$,leftmargin=*]
    \item We propose a model-agnostic graph contrastive learning framework, which utilizes dynamic decision boundary-aware adversarial perturbations to constrain the perturbation space of contrastive augmented view, achieving a better balance between contrastive hardness and sample rationality.
    \item We develop a joint learning algorithm based on multi-view contrastive learning and margin maximum adversarial learning to optimize RGCL, empowering better representation uniformity while improving model robustness.
    \item We give theoretical analyses to underscore the importance of hard contrastive views in model optimization and elucidate the insights behind the efficacy of RGCL in enhancing robustness.
    \item Extensive experiments on five real-world datasets demonstrate the superior performance of our proposed RGCL framework.
\end{itemize}

\section{Preliminaries}

\subsection{GNN-based Recommendation}
Formally, let $\mathcal{U} = \{u_1,u_2,\dots,u_M\}$ and $\mathcal{I}=\{i_1,i_2,\dots,i_N\}$ denote the set of users and items, respectively,
where $M$ and $N$ represent the number of users and items, respectively. Considering recommendation scenario with implicit feedback, a binary matrix $\mathbf{R}\in \mathbb{R}^{M\times N}$ are typically used to record user-item interactions (\emph{e.g.}, clicks or purchases), where $r_{u,i}=1$ indicates that user $u$ has interacted with item $i$, otherwise $r_{u,i}=0$. Following most GNN-based recommendation works~\cite{he2020lightgcn, he2023candidate, huang2021mixgcf}, we formulate the interaction behaviors between users and items as a standard bipartite graph $\mathcal{G}=\{\mathcal{V}, \mathbf{A}\}$, where $\mathcal{V} = \mathcal{U}\cup \mathcal{I}$ involves all graph nodes, and the adjacent matrix $\mathbf{A}$ is defined as follows:
\begin{equation*}
% \scalebox{0.8}{$
\mathbf{A}=
\left[
\begin{array}{cc}
    \mathbf{0}^{M\times M} & \mathbf{R} \\
    \mathbf{R}^T & \mathbf{0}^{N\times N}
\end{array}
\right].
% $}
\end{equation*}
Following the common practice~\cite{he2020lightgcn,cai2023lightgcl}, we encode the user $u$ and item $i$ as \textit{d}-dimensional 
latent vectors $\mathbf{e}_u\in \mathbb{R}^d$ and $\mathbf{e}_i\in \mathbb{R}^d$, respectively. 
Besides, $\mathbf{E}=\{\mathbf{e}_u\mid u \in \mathcal{U}\}\cup \{\mathbf{e}_i\mid i \in \mathcal{I}\}$ is defined as the overall learnable embedding matrix for all nodes.

Similar to other GCL-based works~\cite{yu2022graph, wu2021self, wei2022contrastive}, this paper adopts the LightGCN~\cite{he2020lightgcn} as model backbone.
% and the performances on other GNN backbones are described in Appendix~\ref{generalization_exp} to show the excellent generality of our approach. 
Specifically, the comprehensive graph representations $\mathbf{z}_u$ and $\mathbf{z}_i$ for user $u$ and item $i$ in LightGCN are calculated by 
\begin{equation*}
    \small    
    \setlength{\abovedisplayskip}{2pt}
    \setlength{\belowdisplayskip}{2pt}
    \setlength{\abovedisplayshortskip}{2pt}
    \setlength{\belowdisplayshortskip}{2pt}
    \begin{aligned}
        \mathbf{z}_{u} &= \sum_{l=0}^{L} \mathbf{h}_{u}^{(l)},\quad
        \mathbf{h}_{u}^{(l)} = \sum_{j\in \mathcal{N}_u}\frac{1}{\sqrt{|\mathcal{N}_u| |\mathcal{N}_j|}} \mathbf{h}_{j}^{(l-1)},\quad l\geq1,\\
        \mathbf{z}_{i} &= \sum_{l=0}^{L} \mathbf{h}_{i}^{(l)},\quad
        \mathbf{h}_{i}^{(l)} = \sum_{v\in \mathcal{N}_i}\frac{1}{\sqrt{|\mathcal{N}_i| |\mathcal{N}_v|}} \mathbf{h}_{v}^{(l-1)},\quad l\geq 1,
    \end{aligned}
\end{equation*}
where $\mathcal{N}_u$ and $\mathcal{N}_i$ indicate the neighboring nodes of user $u$ and item $i$, respectively. $\mathbf{h}^{(l)}_u$ and $\mathbf{h}^{(l)}_i$ means the $l$-th layer graph representation for user $u$ and item $i$, respectively. Here, $\mathbf{h}^{(0)}_u$ and $\mathbf{h}^{(0)}_i$ are initialized with the learnable embedding $\mathbf{e}_u$ and $\mathbf{e}_i$, respectively.
% The graph representations of items can be obtained analogously. Then, 
% where $f_{\text{GNN}}$ represents the graph aggregation and propagation operations, which depends on specific designs of various GNN algorithms~\cite{he2020lightgcn, wang2019neural, chen2020revisiting, berg2017graph}. 
The predicted score $\hat{r}_{u,i}$ for the $(u,i)$ pair is computed as the inner product of their graph representations, \emph{i.e.}, $\hat{r}_{u,i}=\Braket{\mathbf{z}_u, \mathbf{z}_i}$.
Finally, the BPR~\cite{rendle2012bpr} loss is adopted as the optimization objective:
\begin{equation}
    \label{bpr}
    \begin{aligned}
        \mathcal{L}_{BPR} = -\sum_{u\in \mathcal{U}}\sum_{i^+\in\mathcal{I}_u^{+}}\sum_{i^-\in \mathcal{I}_{u}^{-}}\ln \sigma(\hat{r}_{u,i^+} - \hat{r}_{u,i^-}),
    \end{aligned}
\end{equation}
where $\sigma(x)=1/(1+e^{-x})$, $\mathcal{I}_u^{+}$ and $\mathcal{I}_u^{-}$ represent the positive item and unobserved item set for user $u$, respectively.

\subsection{GCL-based Recommenders}
In real-world scenarios, interaction behaviors between users and items are actually highly sparse, which can lead to severe overfitting and bias problems~\cite{wu2021self, jing2023contrastive}.
Graph contrastive learning (GCL), as a novel learning paradigm, helps mitigate the above problems~\cite{yu2022graph,cai2023lightgcl}. 
In specific, GCL firstly generates diverse graph views for each user and item (\emph{e.g.}, node dropout and feature masking). Then the different views of the same user (item) are treated as the positive pairs, while the different views of the different instances are treated as the negative pairs. Finally, contrastive learning loss is used to optimize the model parameters with paired users and items, where InfoNCE~\cite{oord2018representation} is the most commonly adopted loss. Formally, the contrastive learning loss for the user side can be defined as follows:
\begin{equation}
    \label{origin_cl_u}
    \begin{aligned}
        \mathcal{L}_{CL}^{U}(\mathbf{x}_{u},\mathbf{y}_{u}) = \sum_{u\in\mathcal{U}} -\log  \frac{\exp(sim(\mathbf{x}_u, \mathbf{y}_u)/\tau)}{\sum_{v\in\mathcal{U}}\exp(sim(\mathbf{x}_u, \mathbf{y}_v)/\tau)},
    \end{aligned}
\end{equation}
where $\mathbf{x}_u$ and $\mathbf{y}_u$ denote the two different augmented views of user $u$, $sim(\cdot,\cdot)$ and $\tau$ represents the cosine similarity function and temperature hyper-parameter, respectively. Similarly, the contrastive learning loss of the item side is formulated as follows:
\begin{equation}
    \label{origin_cl_i}
    \begin{aligned}
        \mathcal{L}_{CL}^{I}(\mathbf{x}_{i},\mathbf{y}_{i}) = \sum_{i\in\mathcal{I}} -\log  \frac{\exp(sim(\mathbf{x}_i, \mathbf{y}_i)/\tau)}{\sum_{j\in\mathcal{I}}\exp(sim(\mathbf{x}_i, \mathbf{y}_j)/\tau)}.
    \end{aligned}
\end{equation}
where $\mathbf{x}_i$ and $\mathbf{y}_i$ denote the two different views of item $i$.

\subsection{Adversarial Robustness}
Adversarial training (AT) stands out as one of the most promising approaches for bolstering adversarial robustness~\cite{moosavi2016deepfool, madry2017towards, goodfellow2014explaining}. The goal of AT is to increase model robustness by generating adversarial examples through well-designed perturbations, which purposefully induce the neural network to error. 
% }
Formally, the optimal perturbation for data sample $(x,y)$ is found by maximizing the loss function $\mathcal{L}(\cdot):\delta^* = \arg \max \mathcal{L}(x+\delta,y;\bm{\theta})$ where
% \begin{equation}
%     \delta^* = \arg \max \mathcal{L}(x+\delta,y;\bm{\theta})\quad s.t.\ \Vert \delta \Vert_p < \epsilon,
% \end{equation}
$\delta$ represents an adversarial perturbation of $\ell_p$ norm smaller than $\epsilon$. Then, the model is trained on a mixture of both original clean examples and generated adversarial examples to enhance the robustness ability.

% \noindent
\textbf{Discussion.} 
Adversarial robustness uncovers the root cause of the model's adversarial vulnerability, that is, the non-smooth feature space near data samples~\cite{jiang2020robust}. In other words, small input perturbations likely result in large changes in the potential semantics, subsequently affecting the model output, which is the basis challenge that adversarial defense algorithms strive to resolve. Actually, this particularly fits well with graph contrastive learning, which aims to maximize the consistency of the given instance under different augmentation views. More importantly, adversarial robustness provides the maximum boundary of feature perturbations that the model can tolerate (\emph{cf.} Sec~\ref{dicision_boundary_aware_pert}), which effectively restrains the exploration space for contrastive augmentation and guides the generation of optimal view-generator.

\section{Our Approach: RGCL}

\begin{figure*}[t]
\centering
\includegraphics[width=\linewidth]{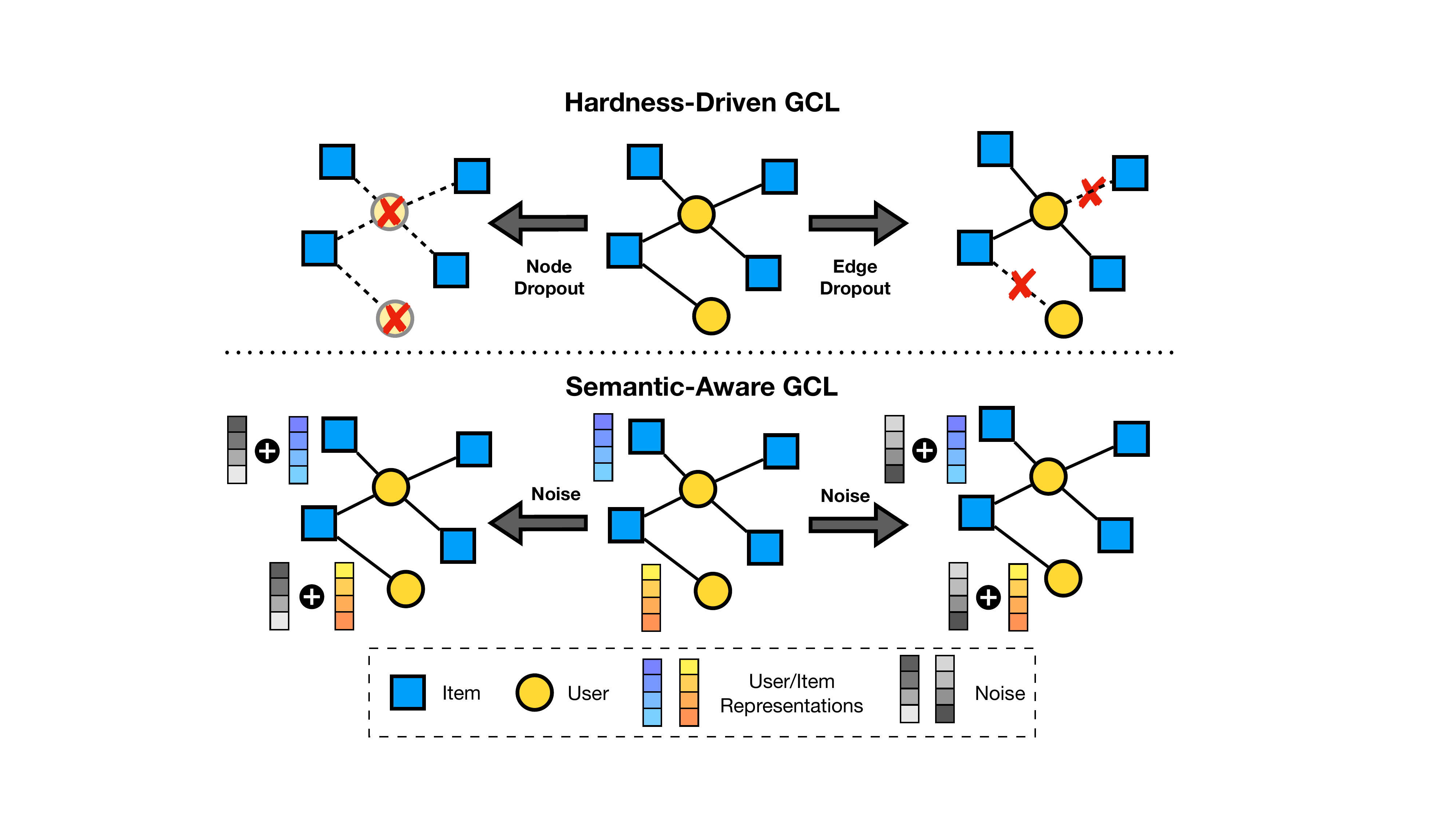} 
\caption{Overall framework of our proposed dynamic decision boundary-aware graph contrastive learning framework RGCL.}
\label{fig:framework}
\end{figure*}

\subsection{Overall Framework}
The overall framework of RGCL is presented in Figure~\ref{fig:framework}. 
In specific, we calculate the maximum feature perturbations to guide the subsequent generation of both contrastive examples and adversarial examples. For contrastive examples, we firstly generate two random-augmented views $\mathbf{Z}^{\prime}$ and $\mathbf{Z}^{\prime\prime}$ using random perturbations. 
Besides, the third view $\mathbf{Z}^{ac}$, which we refer to as adversarial-contrastive view, is generated through maximizing relation-aware contrastive function. On the foundation of these contrastive samples, we employ multi-view contrastive learning to prompt high-quality representations.
Furthermore, to safeguard the model robustness against potential compromises arising from the uniformity optimization of graph contrastive learning, we generate adversarial examples using maximum perturbation to strenuously enlarge the distances between data points and the decision boundary. Finally, the model is updated by employing a joint optimization objective with augmented contrastive and adversarial data.

\subsection{Decision Boundary-aware Perturbation}\label{dicision_boundary_aware_pert}
To build our contrastive samples, we first derive perturbations that the original samples can maximally tolerate to maintain user preferences.
Ideally, the perturbations should satisfy two conditions:
(1) the perturbations should be as large as possible, such that the obtained contrastive samples are hard enough (\textbf{hardness requirement}).
(2) The augmented samples after incorporating the perturbations should be still aligned with the user's original preferences (\textbf{rationality requirement}).

Different from traditional adversarial learning problems based on classification settings, recommender system is basically a ranking problem, and the perturbations should be learned to maintain user preference rankings. To this end, we propose to learn the maximum perturbations that can maintain item pair-wise rankings. Furthermore, given that different orders of graph representations possess different levels of expressive capacity, that is, higher-layer representations aggregate richer structure information and reflect more complex connectivity patterns. Consequently, we tailor the maximum perturbation for each high-order graph representation independently. In specific, for each user $u$ and a positive-negative item pair $(i^+,i^-)$, suppose their original representations are 
$\mathbf{{z}}_{u} = \sum_{l=0}^{L}\mathbf{h}_{u}^{(l)}$, $\mathbf{z}_{i^+}^{\top} = \sum_{l=0}^{L}\mathbf{h}_{i^+}^{(l)}$, and $\mathbf{z}_{i^-}^{\top} = \sum_{l=0}^{L}\mathbf{h}_{i^-}^{(l)}$, respectively. We define the pair-wise ranking function as $g(u,i^+,i^-)=\Braket{\mathbf{\tilde{z}}_{u}^{(k)},\mathbf{{z}}_{i^+}}-\Braket{\mathbf{\tilde{z}}_{u}^{(k)},\mathbf{{z}}_{i^-}}$, where $\mathbf{\tilde{z}}_{u}^{(k)} = \sum_{l=0,l\neq k}^{L}\mathbf{h}_{u}^{(l)} + (\mathbf{h}_{u}^{(k)} + \bm{\Delta})$ is the user embedding after incorporating perturbation $\bm{\Delta}\in\mathbb{R}^d$ to $k$-th layer graph representation $\mathbf{h}_u^{(k)}$, and $<\cdot,\cdot>$ means inner product. Then, the learning objective of perturbation $\bm{\Delta}$ is designed as follows:
\begin{equation}
\mathbf{\Delta}_{u}^{(k)} = \arg\max_{\bm{\Delta}} ||\bm{\Delta}||_p \quad \text{s.t. }  g(u,i^+,i^-)>0,
\end{equation}
where $\Vert\cdot\Vert_p$ means the vector's p-norm. Here, pair-wise ranking function $g(\cdot)$ is linearized around the $k$-th representation $\mathbf{h}_{u}^{(k)}$, thus the maximum perturbation $\bm{\Delta}_u^{(k)}$ is exactly corresponding to the orthogonal projection of $\mathbf{h}_{u}^{(k)}$ onto the model decision hyperplane. 

For the sake of simplicity and better interpretation, we denote that 
$f(\mathbf{h}_{u}^{(k)}) = \partial{g(u,i^+,i^-)}/\partial{\mathbf{h}_{u}^{(k)}}$. The maximum perturbation $\mathbf{\Delta}_u^{(k)}$ is equivalent to solving for the directional vector from $\mathbf{h}_u^{(k)}$ to the decision boundary, which is formally given as follows:
\begin{equation}
    \mathbf{\Delta}_{u}^{(k)} = -\frac{g(u,i^+,i^-)}{\Vert f(\mathbf{h}_{u}^{(k)})\Vert_{q}^{q}}\cdot \text{sign}(f(\mathbf{h}_{u}^{(k)})) \odot \Vert f(\mathbf{h}_{u}^{(k)})\Vert^{q-1},
\end{equation}
where $\text{sign}(\cdot)$ is the sign function, and $\odot$ denotes element-wise product. The value of $q$ depends on the choice of perturbation norm $\ell_p$ ($1\leq p \leq \infty$), and satisfies that $\frac{1}{p}+\frac{1}{q}=1$ by following Holder's Inequality's constraint~\cite{moosavi2016deepfool}. In our work, $p$ is set as $\infty$ and $q$ is set as 1, as we empirically found that perturbation constraints under the $\ell_{\infty}$ norm have better model performance.

Following that, since users often interact with multiple items in real-world recommendation scenarios, we extend the above method to all interactions of user $u$ for deriving the final optimal perturbation constraint, which can be rewritten as follows:
\begin{equation}
    \label{max_pert_vec}
    \begin{aligned}
        \mathbf{\Delta}_{u}^{(k)} = -\frac{g(u,i^+,i^-)}{\Vert f(\mathbf{h}_{u}^{(k)})\Vert_{1}}\cdot \text{sign}(f(\mathbf{h}_{u}^{(k)})),\\
        \text{where}\ i^{+},i^{-} = \underset{i^+\in\mathcal{I}_u^{+}, i^{-}\in\mathcal{I}_u^{-}}{\arg\min} \left| \frac{g(u,i^+,i^-)}{\Vert f(\mathbf{h}_{u}^{(k)})\Vert_{1}} \right|.
    \end{aligned}
\end{equation}

Note that we only focus on perturbing the high-order graph representations for users and items, while skipping the beginning features, \emph{i.e.}, $1\leq k \leq L$. This is because the original features contain the most abundant semantic information, and polluting these features could lead to a severe performance decrease. On the other hand, by perturbing higher-order representations, we subtly and implicitly disrupt the potential semantic and structural characteristics. Intuitively, it can efficaciously simulates the noise encountered in real-world scenarios, thereby further enhancing the model robustness.
Similarly, we can obtain the graph perturbations of item nodes from a dual perspective.

\subsection{Relation-aware Contrastive Learning with Perturbation Constraints}
As highlighted in Sec.~\ref{intro}, existing GCL-based recommenders struggle to achieve a harmonious balance between contrastive hardness and rationality, both of which are pivotal to acquire high-quality user (item) representations. 
% To mitigate this challenge, our RGCL introduces maximum perturbation constraints to guide the view generator in Sec~\ref{dicision_boundary_aware_pert}. Furthermore, 
To this end, in this subsection, we meticulously design the relation-aware adversarial-contrastive objective, which utilizes the global relationships among user-user and item-item to create more challenging positive and hard negative pairs under perturbation constraints. Finally, we optimize the representations through multi-view contrastive learning.

\subsubsection{\textbf{Perturbation-constrained Contrastive Augmentation}}

Following previous works~\cite{yu2022graph,yu2023xsimgcl}, we adopt the random perturbations 
$\{\mathbf{r}_u^{(l)}:l=1,2,\cdots,L\}$ for user $u$ to generate the first random contrastive view $\mathbf{z}_u^{\prime}$ as follows:
\begin{equation}
    \label{random_aug}
    \begin{aligned}
        \mathbf{z}_{u}^{\prime} = \frac{1}{L+1}\left(\mathbf{h}_{u}^{(0)} + \sum_{l=1}^L\left(\mathbf{h}_{u}^{(l)} + \mathbf{r}_{u}^{(l)}\right)\right), \\ 
        \text{where\quad} \mathbf{r}_{u}^{(l)} = \epsilon\cdot \frac{\mathbf{r} \odot \text{sign}(\mathbf{h}_{u}^{(l)})}{\Vert \mathbf{r} \odot \text{sign}(\mathbf{h}_{u}^{(l)}) \Vert_2}.
    \end{aligned}
\end{equation}
Here, $\mathbf{r} \in\mathbb{R}^{d}$ following a uniform distribution $U(0,1)$, and $\epsilon$ is a hyper-parameter to control the initial perturbation magnitude. Similarly, we could obtain the augmentation views $\mathbf{z}_i^{\prime}$ for item $i$.

Following that, we can get the second augmented representations $\mathbf{z}_{u}^{\prime\prime}$ and $\mathbf{z}_{i}^{\prime\prime}$ in the same way but utilizing the perturbations $\textbf{r}$ with different random initialization for more diverse contrastive effects.

However, different users and items have unique \textit{intrinsic robustness}, which means that even imperceptible perturbations may result in large semantic changes for fragile instances. In turn, they unintentionally lead to the erroneous feature-label examples, which is heavily overlooked by existing GCL methods. Therefore, we propose to employ the instance-wise perturbation constrains to guide the generation of contrastive samples, aiming to avoid lossing task-relevant semantic information and build rational view-generator. 
Specifically, for the $l$-layer augmentation perturbations $\mathbf{r}_u^{(l)}$, we constrain its exploration space by using the following projection operation $\Pi(\cdot)$ to obtain the constrained perturbation $\tilde{\mathbf{r}}_u^{(l)}$:
\begin{equation}
    \label{projection}
    \mathbf{\tilde{r}}_u^{(l)} = \Pi(\mathbf{r}_u^{(l)})= \min (abs(\bm{\Delta}_u^{(l)}), \max (-abs(\bm{\Delta}_u^{(l)}), \mathbf{r}_u^{(l)}),
\end{equation}
where $\max(\cdot,\cdot)$ and $\min(\cdot,\cdot)$ are both wise-element operations, and $abs(\cdot)$ computes the absolute value of each element for the given vector.
Here, we conservatively constrain the magnitude of random perturbation $\mathbf{\tilde{r}}_u^{(l)}$ within a bounded $\delta_u^{(l)}$-ball, where we define $\delta_u^{(l)}$ as $||\mathbf{\Delta}_u^{(l)}||_{\infty}$. The main motivation behind Eq.~(\ref{projection}) is that $\mathbf{\Delta}_u^{(l)}$ is the maximum perturbation with the most attacking direction, and our conservative strategy ensures that other perturbation direction bounded within the ball could also safely maintain semantic invariance. Consequently, we replace $\mathbf{r}_u^{(l)}$ in Eq.~(\ref{random_aug}) with constrained perturbation $\mathbf{\tilde{r}}_u^{(l)}$ for achieving contrastive rationality.

\subsubsection{\textbf{Relation-aware Adversarial-Contrastive Augmentation}}
To break the assumption of instance independence in traditional GCL-based algorithms and simultaneously further enhance the hardness of contrastive examples, RGCL generates the relation-aware adversarial-contrastive perturbations to fool the model by confusing the identities among different users and items. 
To be specific, we propose to maximize the following contrastive loss for generating instance-specific perturbations $\bm{\eta}$:
% \begin{equation}
    % \begin{aligned}
\begin{gather}
\label{eta_max}
        \underset{\bm{\eta}}{\max} \sum_{u\in\mathcal{U}} -\log  \frac{\exp(sim(\ddot{\mathbf{z}}_u, \mathbf{z}_u^{\prime\prime})/\tau)}{\exp(sim(\ddot{\mathbf{z}}_u, \mathbf{z}_u^{\prime\prime})/\tau)+\sum_{v\in\mathcal{U}/{u}}\exp(sim(\ddot{\mathbf{z}}_u, \mathbf{z}_v^{\prime\prime})/\tau)}, \notag \\
        \text{where}\ \ddot{\mathbf{z}}_u = \frac{1}{L+1}\left(\mathbf{h}_{u}^{(0)} + \sum_{l=1}^L\left(\mathbf{h}_{u}^{(l)} + \mathbf{\tilde{r}}_u^{(l)} + \bm{\eta}_{u}^{(l)}\right)\right),
\end{gather}
and $\bm{\eta} = \{||\bm{\eta}_{u}^{(l)}||_\infty\leq \delta_u^{(l)}: u \in \mathcal{U}, l \in \{1,2,\dots,L\}\}$ denotes the perturbation set of user $u$. However, as the general GNN-based recommenders involve nonlinear transformations, it is extremely challenging to find a closed-form solution for the above optimization problem. Drawing inspiration from the fast gradient sign method (FGSM) proposed in Goodfellow \emph{et al.}~\cite{goodfellow2014explaining}, which assumes that the objective function is approximately linear around the current model parameters. Building on this approximation, we can obtain an optimal max-norm constrained perturbation as follows:
\begin{equation}
\label{eta_fgsm}    
    \bm{\eta}_u^{(l)} = \delta_u^{(l)}\cdot \text{sign}(\partial \mathcal{L}_{CL}^U(\ddot{\mathbf{z}}_u,\mathbf{z}_u^{\prime\prime})/\partial \bm{\eta}_u^{(l)}).
\end{equation}

Similarly, we can compute the relation-aware perturbations for items. Due to space limitation, the detailed derivation steps are omitted here.
After that, we generate the relation-aware adversarial-contrastive views for users and items as follows:
\begin{equation}
\label{ac_view}
    \small    
    \begin{aligned}
        \mathbf{z}_{u}^{ac} &= \frac{1}{L+1}\left(\mathbf{h}_{u}^{(0)} + \sum_{l=1}^L\left(\mathbf{h}_{u}^{(l)} + \tilde{\mathbf{r}}_u^{(l)} \odot \text{sign}(\bm{\eta}_u^{(l)})\right)\right), \\
        \mathbf{z}_{i}^{ac} &= \frac{1}{L+1}\left(\mathbf{h}_{i}^{(0)} + \sum_{l=1}^L\left(\mathbf{h}_{i}^{(l)} + \tilde{\mathbf{r}}_i^{(l)} \odot \text{sign}(\bm{\eta}_i^{(l)})\right)\right),
    \end{aligned}
\end{equation}
where $\tilde{\mathbf{r}}_u^{(l)}$ and $\tilde{\mathbf{r}}_i^{(l)}$ are defined in Eq.~(\ref{projection}) and note that they are initialized with different random values.

Compared to the random-augmented view, adversarial-contrastive augmentation has two main advantages: (1) The optimization objective integrates global users (items) to confuse their identities, thus the view generation process is essentially guided by the user-user and item-item relationships, resulting in relation-aware and more challenging contrastive representations. (2) Considering different intrinsic vulnerability among instances, our proposed adversarial-contrastive perturbations are instance-specific and dynamically adopted along with the model training process, thereby further improving the model robustness and adaptability.

\subsubsection{\textbf{Multi-View Contrastive Learning}}
In summary, based on the above discussion, we have obtained views triplets $(\mathbf{z}_u^{\prime}, \mathbf{z}_u^{\prime\prime}, \mathbf{z}_u^{ac})$ and $(\mathbf{z}_i^{\prime}, \mathbf{z}_i^{\prime\prime}, \mathbf{z}_i^{ac})$ for user $u$ and item $i$, respectively. Then, we employ multi-view contrastive learning objective for different views of the same instances, \emph{i.e.}, $\{\mathbf{z}_{u}^{\prime}\leftrightarrow\mathbf{z}_{u}^{\prime\prime},\mathbf{z}_u^{ac}\leftrightarrow\mathbf{z}_{u}^{\prime}, \text{ and } \mathbf{z}_u^{ac}\leftrightarrow\mathbf{z}_{u}^{\prime\prime}\}$ for user $u$, while $\mathbf{z}_{i}^{\prime}\leftrightarrow\mathbf{z}_{i}^{\prime\prime}, \mathbf{z}_{i}^{ac}\leftrightarrow\mathbf{z}_{i}^{\prime}, \text{ and } \mathbf{z}_{i}^{ac}\leftrightarrow\mathbf{z}_{i}^{\prime\prime}$ for item $i$.

The complete contrastive loss function is formulated as follows: 
\begin{equation}        
\label{multi_view}
    \begin{aligned}   
        \mathcal{L}_{CL} 
        =& \mathcal{L}_{CL}^{U}(\mathbf{z}_u^{\prime},\mathbf{z}_{u}^{\prime\prime})+ \mathcal{L}_{CL}^{U}(\mathbf{z}_u^{ac},\mathbf{z}_{u}^{\prime}) + \mathcal{L}_{CL}^{U}(\mathbf{z}_u^{ac},\mathbf{z}_{u}^{\prime\prime}) \\
        & \mathcal{L}_{CL}^{I}(\mathbf{z}_i^{\prime},\mathbf{z}_{i}^{\prime\prime})+
        + \mathcal{L}_{CL}^{I}(\mathbf{z}_i^{ac},\mathbf{z}_{i}^{\prime}) + \mathcal{L}_{CL}^{I}(\mathbf{z}_i^{ac},\mathbf{z}_{i}^{\prime\prime}).
    \end{aligned}                                        
\end{equation}
where $\mathcal{L}_{CL}^{U}(\cdot)$ and $\mathcal{L}_{CL}^{I}(\cdot)$ are defined in Eq.~(\ref{origin_cl_u}) and (\ref{origin_cl_i}), respectively.
Through the multi-view contrastive learning approach, the model is able to acquire more difficult knowledge from hard yet rational contrastive pairs, mitigating recommendation biases and preventing the overfitting resulting from sparse supervised data.

\subsection{Towards Margin Maximization via Adversarial Optimization}
However, excessive pursuit of representation uniformity in GCL may lead to reduced distances between data points and the decision boundary, potentially compromising the model robustness. We attribute such dilemma is caused by the inherent deficiency that the GCL's essence is unsupervised learning paradigm, which pushes all different instances apart while ignoring task-specific semantic relations~\cite{wang2021understanding}.
To tackle the above issue, we propose to use adversarial examples for achieving margin maximization. Specifically, we generate adversarial examples using the maximum adverasrial perturbation defined in Eq.~(\ref{max_pert_vec}), which can be formulated as follows:
\begin{equation}
\setlength{\abovedisplayskip}{3pt}
\setlength{\belowdisplayskip}{3pt}
\setlength{\abovedisplayshortskip}{3pt}
\setlength{\belowdisplayshortskip}{3pt}
\label{adv_example}
    \begin{aligned}
        \mathbf{z}_{u}^{adv} &= \frac{1}{L+1}\left(\mathbf{h}_{u}^{(0)} + \sum_{l=1}^L\left(\mathbf{h}_{u}^{(l)} + \mathbf{\Delta}_{u}^{(l)}\right)\right), \\
        \mathbf{z}_{i}^{adv} &= \frac{1}{L+1}\left(\mathbf{h}_{i}^{(0)} + \sum_{l=1}^L\left(\mathbf{h}_{i}^{(l)} + \mathbf{\Delta}_{i}^{(l)}\right)\right).
    \end{aligned}
\end{equation}

We then utilize the generated adversarial examples to optimize the BPR objective (\emph{i.e.}, Eq.~(\ref{bpr})), which is given as follows:
\begin{equation}
\label{adv_loss}    
    \begin{aligned}
        \mathcal{L}_{ADV} = -\sum_{u\in \mathcal{U}}\sum_{i^+\in\mathcal{I}_u^{+}}\sum_{i^-\in \mathcal{I}_{u}^{-}}\ln \sigma(\hat{r}_{u,i}^{adv} - \hat{r}_{u,j}^{adv}), \\
        \text{where}\ 
        \hat{r}_{u,i}^{adv}=\Braket{\mathbf{z}_{u}^{adv},\mathbf{z}_{i}^{adv}},\ 
        \hat{r}_{u,j}^{adv}=\Braket{\mathbf{z}_{u}^{adv},\mathbf{z}_{j}^{adv}}.
    \end{aligned}
\end{equation}

By explicitly creating adversarial examples around the model's decision boundary, the model optimized with both original and adversarial data can more effectively boost the confidence of input data, thereby enhancing the model's overall robustness.

\subsection{Model Training}
\subsubsection{\textbf{Joint Optimization Objective}}
In the training stage, we propose to optimize the model parameters with the joint learning objective, which is formulated as follows:
\begin{equation}\label{final_loss}
    \mathcal{L} = \mathcal{L}_{BPR} + \mu \mathcal{L}_{ADV} + \alpha\mathcal{L}_{CL},
\end{equation}
where $\mu$ and $\alpha$ are the hyper-parameters for different loss terms. 

\subsubsection{\textbf{Complexity Analysis}}
Since RGCL doesn't introduce any other trainable parameters, the space complexity and the inference time complexity of model remains the same as GNN backbone. 
Besides, the total training time complexity of RGCL is $O((L|\mathcal{E}|+B^2)d)$, where $B$ and $\mathcal{E}$ denote the batch size and edge set, respectively. Thus, our method retains the same order of computation complexity as other state-of-the-art GCL-based methods, such as SimGCL~\cite{yu2022graph} and RocSE~\cite{ye2023towards}. Due to the limited space, please refer to Appendix~\ref{time_complexity} for more detailed analysis.
\section{Theoretical Analysis}
\subsection{Hardness-aware Contrastive Learning}\label{theory_1}
The core motivation of this paper is to construct \textbf{semantic preserving} and \textbf{hardness enhancing} view-generator for contrastive learning. 
For the former, 
we capitalize on the decision boundary-aware constraint to help build rationality-aware views. 
For the latter, we carefully construct more challenging contrastive paired data because their hardness significantly affects the optimization process of model parameters. 

To further explain, we give a proof that contrastive loss is essentially hardness-aware learning mechanism.
Specifically, taking the side of users as an example, given a set of users $\mathcal{U}=\{u_1,u_2,\dots,u_M\}$, we denote the similarity of user $u_{i}$ under different augmented views (\emph{e.g.}, random-augmented view or adversarial-contrastive view) as $s_{i,i}$, and the similarity between user $u_{i}$ and $u_{j}$ as $s_{i,j}$. The probability of $u_i$ being identified as $u_j$ is formulated as:
\begin{equation*}
    P_{i,j} = \frac{\exp(s_{i,j}/\tau)}{\exp(s_{i,i}/\tau) + \sum_{k \neq i} \exp(s_{i,k}/\tau))}.
\end{equation*}
Thus, the objective of contrastive learning is rewritten as follows:
\begin{equation*}
    \varphi(u_i) = -\log\frac{\exp(s_{i,i}/\tau)}{\exp(s_{i,i}/\tau) + \sum_{k \neq i} \exp(s_{i,k}/\tau)}.
\end{equation*}
Then, the expression of updating model parameters $\bm{\theta}$ is
\begin{equation*}
    \frac{\partial \varphi(u_i)}{\partial\bm{\theta}} = \frac{\partial \varphi(u_i)}{\partial{s_{i,i}}} \frac{\partial{s_{i,i}}}{\partial\bm{\theta}} + \sum_{j \neq i}\frac{\partial \varphi(u_i)}{\partial{s_{i,j}}} \frac{\partial{s_{i,j}}}{\partial\bm{\theta}},
\end{equation*}
where we give the derivation results for $\frac{\partial \varphi(u_i)}{\partial{s_{i,i}}}$ and $\frac{\partial \varphi(u_i)}{\partial{s_{i,j}}}$:
\begin{equation}
    \begin{aligned}
        \frac{\partial \varphi(x_i)}{\partial s_{i,i}} &= \frac{1}{\tau} (P_{i,i}-1) \propto \exp(s_{i,i}/\tau),\\        \frac{\partial \varphi(u_i)}{\partial{s_{i,j}}} &= \frac{1}{\tau} P_{i,j} \propto \exp(s_{i,j}/\tau),
    \end{aligned}
\end{equation}
where we can observe that the gradients of the contrastive loss \emph{w.r.t.} both positive and negative pairs are proportional to the corresponding exponential form of their similarity scores. 
This means that smaller positive pair similarity $s_{i,i}$ and larger negative pair similarity $s_{i,j}$ will have a greater impact on the model parameter optimization.
Therefore, our proposed RGCL can learn the high-quality representations by constructing the challenging positive pairs and hard negative pairs, which fits to guide model optimization through hardness-aware contrastive learning.

\subsection{Theoretical Analysis of Model Robustness}
Although contrastive learning can improve the representation uniformity and reduce the recommendation bias, it may potentially push data points closer to model decision boundary and eventually decrease model robustness due to the nature of task-unrelated unsupervised learning. To make it up, our RGCL explicitly maximizes the margin by constructing adversarial examples based on decision boundary-aware perturbation. Then, in this subsection, we give the explanation on the rationality of our method. 

For the sake of notation simplicity, we assume that input example is denoted as $x$. The goal of recommendation algorithm is to make the preference probabilities for user $u$'s positive items are higher than that for negative items, which is denoted as $g(x;\bm{\theta})>0$. Inspired by work~\cite{ding2018mma}, the margin between data point and decision boundary is denoted as $d(x;\bm{\theta})$, which can be defined as follows:
\begin{equation}
    d(x;\bm{\theta})=\Vert \mathbf{\Delta}^* \Vert = \max \Vert \mathbf{\Delta} \Vert\quad s.t.\ \mathbf{\Delta}: g(x+\mathbf{\Delta};\bm{\theta})> 0.
\end{equation}
We denote the BPR loss function as $\psi(\cdot)$, then we have the theorem:
\begin{theorem}
    Gradient descent on $\psi(g(x+\mathbf{\Delta}^*;\bm{\theta}))$ \emph{w.r.t.} $\bm{\theta}$ with a proper step size increases $d(x;\bm{\theta})$, where $\mathbf{\Delta}^*=\arg \max_{g(x+\mathbf{\Delta};\bm{\theta})> 0} \Vert \mathbf{\Delta} \Vert$ is the maximum perturbation given the current $\bm{\theta}$.
\end{theorem}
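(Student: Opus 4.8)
The plan is to follow the margin-maximization argument of Ding \emph{et al.}~\cite{ding2018mma}, adapted from the classification setting to the BPR ranking function $g(\cdot;\bm{\theta})$. First I would set up the geometry: at the current parameters $\bm{\theta}$, the maximum perturbation $\mathbf{\Delta}^*$ lands exactly on the decision boundary, i.e.\ $g(x+\mathbf{\Delta}^*;\bm{\theta})=0$, and by the optimality of $\mathbf{\Delta}^*$ the vector $\mathbf{\Delta}^*$ is parallel (up to sign) to the boundary normal $\nabla_x g(x+\mathbf{\Delta}^*;\bm{\theta})$, with $d(x;\bm{\theta})=\Vert\mathbf{\Delta}^*\Vert$. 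This is just the Lagrangian/KKT stationarity condition for maximizing $\Vert\mathbf{\Delta}\Vert$ subject to $g\ge 0$, and it is what makes $d(x;\bm{\theta})$ equal to the distance from $x$ to the level set $\{g=0\}$.

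Next I would differentiate the margin implicitly. Viewing the boundary point $x^{*}(\bm{\theta}) = x+\mathbf{\Delta}^{*}(\bm{\theta})$ as defined by the constraint $g(x^{*}(\bm{\theta});\bm{\theta})=0$, I would apply the implicit function theorem and the envelope theorem to express $\nabla_{\bm{\theta}} d(x;\bm{\theta})$ in terms of $\nabla_{\bm{\theta}} g(x^{*};\bm{\theta})$ and $\nabla_{x} g(x^{*};\bm{\theta})$; the key identity one obtains (as in MMA) is that $\nabla_{\bm{\theta}} d(x;\bm{\theta})$ is a \emph{positive} multiple of $\nabla_{\bm{\theta}} g(x^{*};\bm{\theta})$, the proportionality constant being $1/\Vert\nabla_x g(x^{*};\bm{\theta})\Vert$. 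Then I would compute the gradient of the training objective: since $\psi$ is the BPR loss (which, as the composition of $-\ln\sigma$ with $g$, is strictly decreasing in its argument $g$), we have $\nabla_{\bm\theta}\,\psi(g(x+\mathbf{\Delta}^*;\bm\theta)) = \psi'(g(x^*;\bm\theta))\,\nabla_{\bm\theta} g(x^*;\bm\theta)$ with $\psi'<0$. A gradient-descent step moves $\bm\theta$ along $-\nabla_{\bm\theta}\psi = |\psi'|\,\nabla_{\bm\theta} g(x^*;\bm\theta)$, which is a positive multiple of $\nabla_{\bm\theta} d(x;\bm\theta)$; hence to first order the step increases $d(x;\bm{\theta})$, and a Taylor expansion with a sufficiently small (``proper'') step size $\gamma$ makes this rigorous: $d(x;\bm{\theta}-\gamma\nabla_{\bm\theta}\psi) = d(x;\bm{\theta}) + \gamma\,c\,\Vert\nabla_{\bm\theta} g(x^{*};\bm\theta)\Vert^{2} + O(\gamma^{2}) > d(x;\bm{\theta})$ for some $c>0$.

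The subtle points — and where I expect the main obstacle — are the differentiability and regularity assumptions that make the envelope/implicit-function argument valid: one needs $\nabla_x g(x^{*};\bm\theta)\neq 0$ at the boundary point (nondegenerate decision surface), uniqueness of the closest boundary point so that $\mathbf{\Delta}^{*}(\bm\theta)$ is a well-defined differentiable function of $\bm\theta$, and smoothness of $g$ in both arguments. Under the $\ell_{\infty}$-norm constraint actually used in the paper the perturbation geometry is non-Euclidean (the ``ball'' is a box), so strictly speaking $\Vert\cdot\Vert$ in the theorem should be read as the norm in which the margin is measured, and the proportionality constants change accordingly; I would state the theorem for a generic smooth norm, note that the local linearization of $g$ around $\mathbf{h}^{(k)}$ (already invoked in Section~\ref{dicision_boundary_aware_pert}) is what lets us treat $g$ as affine near $x^{*}$, and flag these as standing assumptions rather than re-deriving them. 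With those assumptions in place the proof is a short chain: (i) KKT characterization of $\mathbf{\Delta}^{*}$; (ii) envelope theorem giving $\nabla_{\bm\theta} d \propto_{+} \nabla_{\bm\theta} g(x^{*};\bm\theta)$; (iii) chain rule giving $-\nabla_{\bm\theta}\psi \propto_{+} \nabla_{\bm\theta} g(x^{*};\bm\theta)$; (iv) Taylor expansion in the step size to conclude $d$ strictly increases.
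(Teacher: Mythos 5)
Your proposal follows essentially the same route as the paper: both invoke the MMA-style implicit-function/envelope argument of Ding \emph{et al.} to obtain $\nabla_{\bm\theta} d(x;\bm\theta) = C(x,\bm\theta)\,\partial\psi(g(x+\mathbf{\Delta}^*;\bm\theta))/\partial\bm\theta$ under the same regularity hypotheses (twice-continuous differentiability of $\rho$ and $\psi\circ g$ near $(\mathbf{\Delta}^*,\bm\theta)$ and a full-rank bordered Hessian). If anything, your write-up is slightly more complete than the paper's: the paper states the gradient identity with the scalar $C(x,\bm\theta)$ and stops, whereas you explicitly verify the sign of the proportionality constant (via $\psi'<0$ and the KKT alignment of $\mathbf{\Delta}^*$ with the boundary normal) and add the first-order Taylor step in the learning rate, both of which are needed to actually conclude that a descent step increases the margin.
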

\begin{proof}
    Let $\rho(\mathbf{\Delta})=\Vert \mathbf{\Delta} \Vert$ and assume $\rho(\mathbf{\Delta})$ and $\psi(g(x;\bm{\theta}))$ are functions with twice continuous derivatives in a neighborhood of $(\mathbf{\Delta}^*,\bm{\theta})$, $c$ is a constant, and the matrix 
    $$\begin{pmatrix}
        \frac{\partial^2 \rho(\mathbf{\Delta}^*)}{\partial \mathbf{\Delta}^2} + c \cdot \frac{\partial^2 \psi(g(x+\mathbf{\Delta}^*;\bm{\theta}))}{\partial \mathbf{\Delta}^2} & \frac{\psi(g(x+\mathbf{\Delta}^*;\bm{\theta}))}{\partial \mathbf{\Delta}} \\
        \left(\frac{\partial \psi(g(x+\mathbf{\Delta}^*;\bm{\theta}))}{\partial \mathbf{\Delta}}\right)^T & 0
    \end{pmatrix}$$
    is full rank, then we have 
    \begin{equation*}
        \nabla d(x;\bm{\theta}) = C(x,\bm{\theta}) \frac{\partial \psi(g(x+\mathbf{\Delta}^*;\bm{\theta}))}{\partial \theta},
    \end{equation*}
\noindent
where
\begin{equation*}
    C(x,\bm{\theta}) = \frac{\left\langle \frac{\partial \rho(\mathbf{\Delta}^*)}{\partial \mathbf{\Delta}}, \frac{\partial \psi(g(x+\mathbf{\Delta}^*;\bm{\theta}))}{\partial \mathbf{\Delta}} \right\rangle}{\left\| \frac{\partial \psi(g(x+\mathbf{\Delta}^*;\bm{\theta}))}{\partial \mathbf{\Delta}} \right\|_2^2}
\end{equation*}
is a scalar. 
\end{proof}
The above proof demonstrates that under proper perturbations, our method can maximize the margin by minimizing the adversarial loss. 
Therefore, our proposed method can maximize the margin between data points and the model decision boundary by generating adversarial examples with the maximum perturbations defined in Seq.~\ref{dicision_boundary_aware_pert}, thereby effectively improving the robustness of model.
Besides, we give an additional robust analysis of our method from the perspective of connections between the sharpness of loss landscape and PAC-Bayes theory. It further theoretically elaborates on the model's tolerance to parameter perturbations. The detailed analysis is presented in the Appendix~\ref{theory}.

\begin{table*}[h]
    \renewcommand{\arraystretch}{1.1}
    \centering
    \caption{Overall performance comparison among baseline and our models. We use bold fonts to label the best performance and use underlines to label the second. The NDCG and Recall metrics are abbreviated as `N' and `R', respectively.}
    \vspace{-0.2cm}
    \label{tab:overall}
    \resizebox{\linewidth}{!}{
    \begin{tabular}{*{17}{c}} 
        \hline \hline
        Dataset & Metric & BPRMF & NeuMF & GCMC & NGCF & GCCF & LightGCN & GraphCL & SGL & LightGCL & RocSE & CGI & SimGCL & RGCL & Improv. & \textit{p}-value\\
        \hline
        \multirow{6}{*}{ML-1M}  & R@10 & 0.1702 & 0.1553 & 0.1676 & 0.1763 & 0.1753 & 0.1774 & 0.1837 & 0.1828 & 0.1796 & 0.1786 & 0.1797 & \underline{0.1866} & \textbf{0.1934} & +3.91\% & 2.67e-4 \\ 
        & N@10 & 0.2485 & 0.2291 & 0.2480 & 0.2544 & 0.2624 & 0.2581 & 0.2617 & 0.2625 & 0.2591 & 0.2577 & 0.2613 & \underline{0.2657} & \textbf{0.2694} & +1.58\% & 7.52e-4 \\ \cline{2-17}
        & R@20 & 0.2582 & 0.2400 & 0.2526 & 0.2673 & 0.2611 & 0.2680 & 0.2749 & 0.2745 & 0.2722 & 0.2699 & 0.2703 & \underline{0.2798} & \textbf{0.2901 } & +3.69\% & 7.50e-4 \\ 
        & N@20 & 0.2576 & 0.2393 & 0.2551 & 0.2647 & 0.2677 & 0.2670 & 0.2721 & 0.2725 & 0.2693 & 0.2676 & 0.2699 & \underline{0.2758} & \textbf{0.2821 } & +2.29\% & 2.26e-3 \\ \cline{2-17}
        & R@50 & 0.4174 & 0.3952 & 0.4073 & 0.4297 & 0.4171 & 0.4310 & 0.4379 & 0.4381 & 0.4343 & 0.4333 & 0.4308 & \underline{0.4468} & \textbf{0.4581 } & +2.53\% & 4.42e-4 \\ 
        & N@50 & 0.3038 & 0.2848 & 0.2985 & 0.3121 & 0.3109 & 0.3137 & 0.3196 & 0.3202 & 0.3162 & 0.3149 & 0.3158 & \underline{0.3242} & \textbf{0.3321 } & +2.42\% & 4.08e-4 \\ \hline
        \multirow{6}{*}{Alibaba} & R@10 & 0.0682 & 0.0450 & 0.0503 & 0.0700 & 0.0707 & 0.0734 & 0.0741 & 0.0769 & 0.0747 & 0.0767 & 0.0740 & \underline{0.0791} & \textbf{0.0824} & +4.20\% & 1.69e-3 \\ 
        & N@10 & 0.0435 & 0.0284 & 0.0308 & 0.0446 & 0.0446 & 0.0461 & 0.0473 & 0.0486 & 0.0469 & 0.0485 & 0.0466 & \underline{0.0502} & \textbf{0.0528} & +5.00\% & 1.57e-4 \\ \cline{2-17}
        & R@20 & 0.1070 & 0.0718 & 0.0805 & 0.1101 & 0.1104 & 0.1138 & 0.1151 & 0.1187 & 0.1158 & 0.1166 & 0.1146 & \underline{0.1218} & \textbf{0.1267} & +4.00\% & 4.02e-4 \\ 
        & N@20 & 0.0553 & 0.0365 & 0.0399 & 0.0568 & 0.0567 & 0.0584 & 0.0598 & 0.0613 & 0.0594 & 0.0607 & 0.0589 & \underline{0.0632} & \textbf{0.0663} & +4.85\% & 1.54e-6 \\ \cline{2-17}
        & R@50 & 0.1875 & 0.1282 & 0.1454 & 0.1920 & 0.1931 & 0.1975 & 0.1944 & 0.2020 & 0.2010 & 0.1937 & 0.1967 & \underline{0.2059} & \textbf{0.2129} & +3.40\% & 4.63e-4 \\ 
        & N@50 & 0.0746 & 0.0501 & 0.0554 & 0.0764 & 0.0765 & 0.0784 & 0.0787 & 0.0812 & 0.0798 & 0.0792 & 0.0786 & \underline{0.0834} & \textbf{0.0869} & +4.29\% & 1.12e-4 \\ \hline
        \multirow{6}{*}{Kuaishou} & R@10 & 0.0565 & 0.0588 & 0.0645 & 0.0663 & 0.0787 & 0.0730 & 0.0738 & 0.0748 & 0.0775 & 0.0714 & 0.0726 & \underline{0.0788} & \textbf{0.0899} & +14.14\% & 5.05e-6 \\ 
        & N@10 & 0.0326 & 0.0351 & 0.0375 & 0.0370 & 0.0441 & 0.0413 & 0.0436 & 0.0450 & \underline{0.0461} & 0.0409 & 0.0417 & 0.0451 & \textbf{0.0498} & +8.00\% & 6.99e-4\\ \cline{2-17}
        & R@20 & 0.0992 & 0.1095 & 0.1193 & 0.1266 & 0.1327 & 0.1269 & 0.1225 & 0.1282 & \underline{0.1430} & 0.1242 & 0.1316 & 0.1325 & \textbf{0.1529} & +6.88\% & 4.03e-4 \\ 
        & N@20 & 0.0457 & 0.0504 & 0.0541 & 0.0551 & 0.0603 & 0.0573 & 0.0584 & 0.0609 & \underline{0.0660} & 0.0571 & 0.0596 & 0.0613 & \textbf{0.0687} & +4.09\% & 3.89e-3\\ \cline{2-17}
        & R@50 & 0.2027 & 0.2172 & 0.2203 & 0.2562 & 0.2477 & 0.2388 & 0.2366 & 0.2522 & \underline{0.2788} & 0.2489 & 0.2565 & 0.2503 & \textbf{0.2865} & +2.79\% & 8.94e-3 \\ 
        & N@50 & 0.0702 & 0.0760 & 0.0782 & 0.0857 & 0.0879 & 0.0840 & 0.0854 & 0.0902 & \underline{0.0980} & 0.0866 & 0.0891 & 0.0897 & \textbf{0.1005} & +2.54\% & 9.41e-3\\ 
        \hline
        \multirow{6}{*}{Gowalla} & R@10 & 0.1330 & 0.1205 & 0.1185 & 0.1296 & 0.1319 & 0.1419 & 0.1540 & 0.1470 & 0.1448 & 0.1461 & 0.1447 & \underline{0.1564} & \textbf{0.1606} & +2.66\% & 7.69e-4\\ 
        & N@10 & 0.1162 & 0.1038 & 0.1013 & 0.1136 & 0.1150 & 0.1257 & 0.1363 & 0.1305 & 0.1277 & 0.1271 & 0.1280 & \underline{0.1379} & \textbf{0.1419} & +2.89\% & 1.84e-3\\ \cline{2-17}
        & R@20 & 0.1894 & 0.1783 & 0.1749 & 0.1878 & 0.1924 & 0.2041 & 0.2178 & 0.2123 & 0.2085 & 0.2117 & 0.2059 & \underline{0.2245} & \textbf{0.2272} & +1.18\% & 1.83e-2 \\ 
        & N@20 & 0.1355 & 0.1238 & 0.1205 & 0.1333 & 0.1356 & 0.1470 & 0.1579 & 0.1527 & 0.1493 & 0.1495 & 0.1487 & \underline{0.1610} & \textbf{0.1646} & +2.22\% & 4.59e-3\\ \cline{2-17}
        & R@50 & 0.3003 & 0.2888 & 0.2832 & 0.3009 & 0.3057 & 0.3194 & 0.3335 & 0.3273 & 0.3240 & 0.3297 & 0.3205 & \underline{0.3460} & \textbf{0.3468} & +0.23\% & 1.31e-1\\ 
        & N@50 & 0.1682 & 0.1563 & 0.1524 & 0.1667 & 0.1691 & 0.1810 & 0.1922 & 0.1867 & 0.1835 & 0.1845 & 0.1826 & \underline{0.1969} & \textbf{0.2000} & +1.55\% & 1.58e-3\\ 
        \hline
        \multirow{6}{*}{Yelp} & R@10 & 0.0509 & 0.0407 & 0.0520 & 0.0506 & 0.0512 & 0.0612 & 0.0663 & 0.0681 & 0.0626 & 0.0656 & 0.0579 & \underline{0.0740} & \textbf{0.0753} & +1.75\% & 1.16e-2\\ 
        & N@10 & 0.0392 & 0.0309 & 0.0400 & 0.0390 & 0.0399 & 0.0479 & 0.0518 & 0.0532 & 0.0487 & 0.0512 & 0.0449 & \underline{0.0582} & \textbf{0.0591} & +1.58\% & 6.58e-3\\ \cline{2-17}
        & R@20 & 0.0844 & 0.0691 & 0.0867 & 0.0842 & 0.0851 & 0.1001 & 0.1067 & 0.1098 & 0.1021 & 0.1052 & 0.0940 & \underline{0.1182} & \textbf{0.1191} & +0.78\% & 1.52e-3 \\ 
        & N@20 & 0.0509 & 0.0408 & 0.0520 & 0.0507 & 0.0517 & 0.0614 & 0.0658 & 0.0677 & 0.0624 & 0.0650 & 0.0574 & \underline{0.0736} & \textbf{0.0744} & +1.09\% & 2.83e-3 \\ \cline{2-17}
        & R@50 & 0.1571 & 0.1339 & 0.1623 & 0.1570 & 0.1582 & 0.1814 & 0.1909 & 0.1950 & 0.1852 & 0.1871 & 0.1704 & \underline{0.2075} & \textbf{0.2108} & +1.58\% & 2.36e-3 \\ 
        & N@50 & 0.0720 & 0.0596 & 0.0740 & 0.0718 & 0.0730 & 0.0850 & 0.0903 & 0.0925 & 0.0865 & 0.0888 & 0.0796 & \underline{0.0995} & \textbf{0.1010} & +1.46\% & 2.03e-3 \\ 
        \hline\hline
    \end{tabular}
}
    % \vspace{-0.3cm}
\end{table*}

\begin{figure}[t]
\centering
\includegraphics[width=\linewidth]{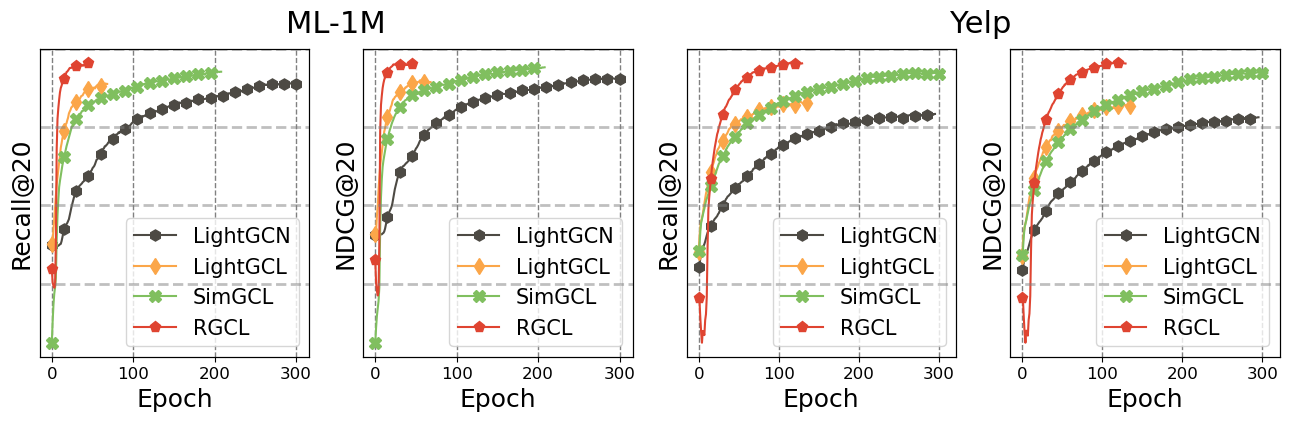} 
\caption{Model convergence analysis w.r.t training epochs on the ML-1M and Yelp datasets.} 
\label{fig:convergence}
\end{figure}

\section{EXPERIMENTS}
In this section, we conduct extensive experiments to validate the effectiveness of RGCL, and our goal is to answer the following research questions: 
\begin{itemize}[leftmargin=*]
\item \textbf{RQ1:} How does RGCL perform compared with state-of-the-art recommendation models?
\item \textbf{RQ2:} How do different designs of RGCL contribute to the final recommendation performance?
\item \textbf{RQ3:} How does RGCL perform against different data sparsity and item popularity?
\item \textbf{RQ4:} How do different hyper-parameters affect the recommendation performance of RGCL?
\end{itemize}

\subsection{Experimental Setup}
\textbf{Datasets.} We conduct extensive experiments on the following public recommendation datasets: MovieLens (ML)-1M~\cite{harper2015movielens}, Alibaba~\cite{chen2019pog}, Kuaishou~\cite{gao2022kuairand}, Gowalla~\cite{cho2011friendship}, and Yelp. For detailed introductions and preprocessing details of these datasets, please refer to Appendix~\ref{datasets}.

\noindent
\textbf{Baseline Models.} We compare RGCL with different state-of-the-art recommendation models, including traditional recommenders (BPR~\cite{rendle2012bpr} and NeuMF~\cite{he2017neural}), GNN-based recommenders (GCMC\cite{berg2017graph}, NGCF~\cite{wang2019neural}, GCCF~\cite{chen2020revisiting}, and LightGCN~\cite{he2020lightgcn}) and GCL-based recommenders (GraphCL~\cite{you2020graph}, SGL~\cite{wu2021self} , LightGCL~\cite{cai2023lightgcl}, CGI~\cite{wei2022contrastive}, RocSE~\cite{ye2023towards},  and SimGCL~\cite{yu2022graph}). The detailed introduction of all these baseline models are referred to Appendix~\ref{baselines}.

\noindent
\textbf{Evaluation Metrics.} To ensure the evaluation reliability, following standard practice~\cite{wei2022contrastive, wu2021self,yang2023generative}, we adopt the full-ranking strategy to mitigate the evaluation bias introduced by randomly negative sampling, which ranks all the items that are not interacted by the test user as candidate item pool. For evaluation metrics, we adopt the Normalized Discounted Cumulative Gain@$K$ (NDCG@$K$) and Recall@$K$, where $K\in\{10, 20, 50\}$.

For better reproducibility, more implementation details are provided in Appendix~\ref{implementation_details} and \url{https://tangjiakai.github.io/RGCL/}.

\subsection{Overall Performance (RQ1)}

The results of different methods on all datasets are shown in Table~\ref{tab:overall}. Based on the results, we have the following observations:
\begin{itemize}[noitemsep, nolistsep, label=$\bullet$, leftmargin=*]
    \item Compared to traditional baselines, such as BPRMF and NeuMF, all GNN-based models perform better on most datasets, which agrees with the previous work and confirms the effectiveness of GNNs~\cite{he2020lightgcn, wang2019neural}. Among all the GNN-based methods, LightGCN usually achieves the  excellent performance due to its simple yet effective linear convolution structure. Furthermore, most GCL-based recommenders outperform the GNN-based methods, indicating the desirable property of GCL for alleviating the bias introduced by high-degree nodes. 
    % However, these GCL-based models still struggle to strike a ideal balance between semantic preserving and hardness enhancing when constructing contrastive examples. we speculate that one possible reason is that these methods fail to clearly delineate the definitions of task-relevant semantical relations and contrastive hardness in the field of recommender scenarios, resulting in unreasonable designs of view-generators.
    However, these GCL-based models fail to explicitly delineate the definitions of task-relevant semantic rationality and contrastive hardness, thus they achieve inferior balance between contrastive rationality and hardness when constructing augmentation views.
    \item By comparing our approach with all state-of-the-art baselines, it is clear to see that RGCL yields a consistent boost across all datasets. Besides, the most $p$-values that are much less than 0.01 also demonstrate the effectiveness of RGCL. 
    We attribute the marked enhancement in performance to the excellent balance between preserving semantic information and bolstering hardness of contrastive examples, which further prompts the ability upper bound of GCL-based recommenders. Besides, we increase the distance between sample points and decision boundary through enhanced adversarial examples, avoiding compromises in robustness caused by contrastive learning.
\end{itemize}

\noindent
\textbf{Training Efficiency.} Moreover, to verify the convergence performance of RGCL, we track the Recall@20 and NDCG@20 curves of different models \emph{w.r.t.} the training epochs in Figure~\ref{fig:convergence}. From the results, we can observe that RGCL converges significantly faster than SimGCL and LightGCN. Although LightGCL also achieves great convergence speed, its accuracy performance is worse than RGCL, as seen in \autoref{tab:overall}. One possible reason is that its static SVD contrastive view fails to keep pace with the evolving model capability during training, eventually limiting the improvement of representation quality. Different from these baselines, RGCL adopts the decision boundary-aware perturbation to guide on the example generation, which adaptively adjusts the augmentation strength to reduce the inconsistency between the representation quality and the contrastive hardness. As a result, RGCL shows both significantly greater efficiency and efficacy.

\subsection{Ablation Study (RQ2)}
To further validate the importance and contribution of each component in RGCL, we devise multiple simplified variants.
In specific, we compare the following four variants: (1) in \underline{w/o cons}, we drop the decision boundary-aware perturbation constraints on contrastive views. (2) In \underline{w/o rand}, we do not introduce random initialized perturbation (\emph{i.e.}, set $\mathbf{r}$ as all-one vector). (3) In \underline{w/o ac}, we drop the relation-aware view generator but only retain two random augmented views; (4) In \underline{w/o adv}, we drop the adversarial regularization term $\mathcal{L}_{ADV}$ in the final loss. The experiment is conducted based on the datasets of ML-1M and Yelp, while the observation and conclusion on the other datasets are similar and omitted.

We present the results in Table~\ref{tab:ablation}, where we can see: 
For \underline{w/o cons} variant, unconstrained perturbations result in a significant performance decrease, suggesting that a uniform perturbation cannot effectively preserve that semantic information due to different intrinsic robustness among instances. The \underline{w/o rand} variant performs much worse than RGCL, which demonstrates that introducing some variances for augmented views is necessary. 
Furthermore, our method gains improvement over \underline{w/o ac} variant, which reveals the importance of challenging positive pairs and hard negative pairs
However, only optimizing contrastive learning is still sub-optimal, which is evidenced by the lowered performance of \underline{w/o adv} variant as compared with RGCL. We speculate that over-optimizing contrastive learning for representation uniformity may potentially lead to a reduction in the distance between data points and the model's decision boundary, eventually deteriorating the robustness.
In summary, the above observations demonstrate that all the designs are crucial to the final performance improvement.

\begin{table*}[t]
    \renewcommand{\arraystretch}{1.1}
    \centering
    \caption{Ablation Study on ML-1M and Yelp datasets.}
    \label{tab:ablation}
    \begin{tabular}{c*{4}{c}|*{4}{c}} 
        \hline \hline
        \multirow{2}{*}{Model} & \multicolumn{4}{c|}{ML-1M} & \multicolumn{4}{c}{Yelp} \\
         & R@20 & N@20 & R@50 & N@50 & R@20 & N@20 & R@50 & N@50 \\
        \hline
         w/o cons & 0.2882 & 0.2798 & 0.4566 & 0.3302 & 0.1185 & 0.0733 & 0.2086 & 0.0995 \\
         w/o rand & 0.2838 & 0.2793 & 0.4470 & 0.3265 & 0.1183 & 0.0736 & 0.2080 & 0.0996 \\
         w/o ac & 0.2872 & 0.2813 & 0.4570 & 0.3315 & 0.1182 & 0.0737 & 0.2085 & 0.1000 \\
         w/o adv & 0.2832 & 0.2801 & 0.4470 & 0.3276 & 0.1180 & 0.0737 & 0.2083 & 0.1000  \\
         RGCL & \textbf{0.2901} & \textbf{0.2821} & \textbf{0.4581} & \textbf{0.3321} & \textbf{0.1191} & \textbf{0.0744} & \textbf{0.2108} & \textbf{0.1010} \\
        \hline \hline
    \end{tabular}
\end{table*}

\subsection{Robustness Evaluation (RQ3)}\label{RQ3}

\begin{figure}[t]
\centering
\includegraphics[width=\linewidth]{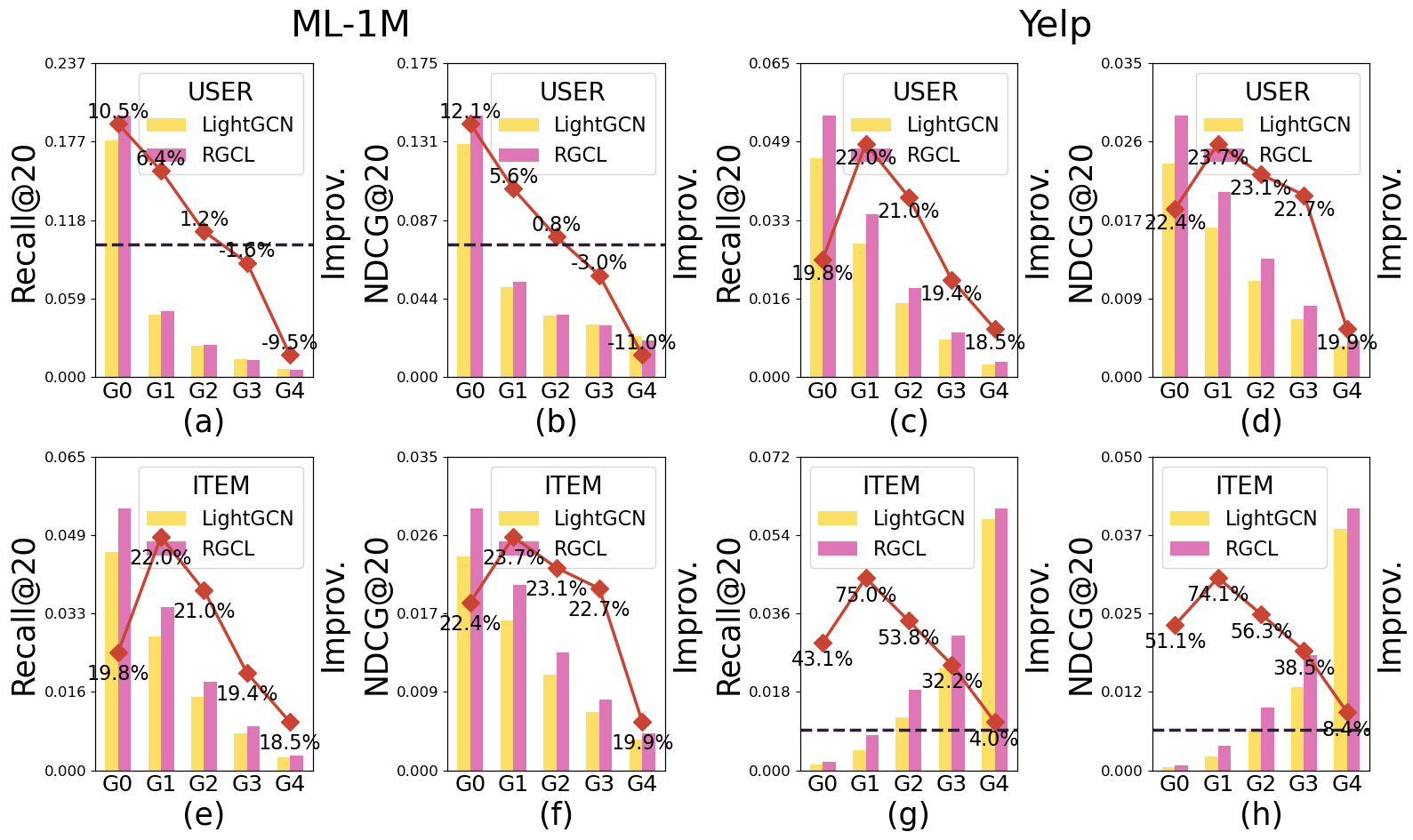} 
% \vspace{-0.6cm}
\caption{Recommendation performances at different level of data sparsity and item popularity. The black dashed line represents no performance improvement or decline.} 
\label{fig:sparsity}
% \vspace{-0.3cm}
\end{figure}

To validate the model robustness, we conduct experimental analysis based on different levels of user activity level and item popularity. For detailed user and item grouping approaches, please refer to Appendix~\ref{group}. The experimental results are presented in Figure~\ref{fig:sparsity}, where we can observe that in user (item) groups with sparse interactions, RGCL demonstrates more significant performance improvements. This implies that RGCL effectively capture interest preference of inactive users and characteristic of long-tailed items. Note that the performance trends on the item side for ML-1M and Yelp datasets are different. We speculate that one possible reason is that the proportion of long-tailed items in ML-1M is much higher than Yelp, which results in major contribution to the overall performance by low-degree item groups in ML-1M.

\subsection{Further Analysis of RGCL (RQ4)}
In this subsection, we further conduct more detailed experiments on the RGCL method to confirm its effectiveness. Due to space limitation, we only show the results on ML-1M and Yelp datasets while the similar conclusions can be derived from other datasets.

\begin{figure}[t]
\centering
\includegraphics[width=\linewidth]{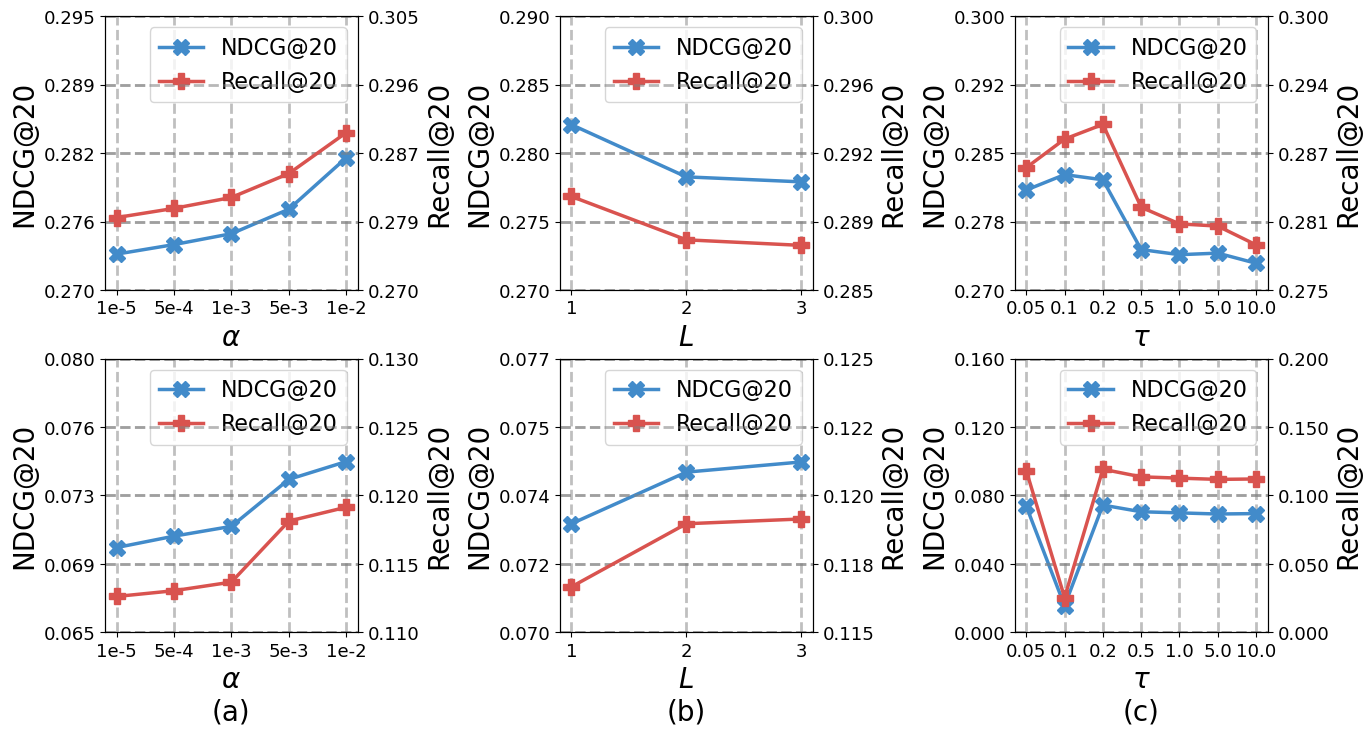} 
\caption{Hyper-parameter analysis w.r.t. $\alpha$, $L$, $\tau$. The top shows the experimental results on ML-1M and the bottom shows the results on Yelp.} 
\label{fig:hyper}
\end{figure}

\begin{figure}[t]
\centering
\includegraphics[width=\linewidth]{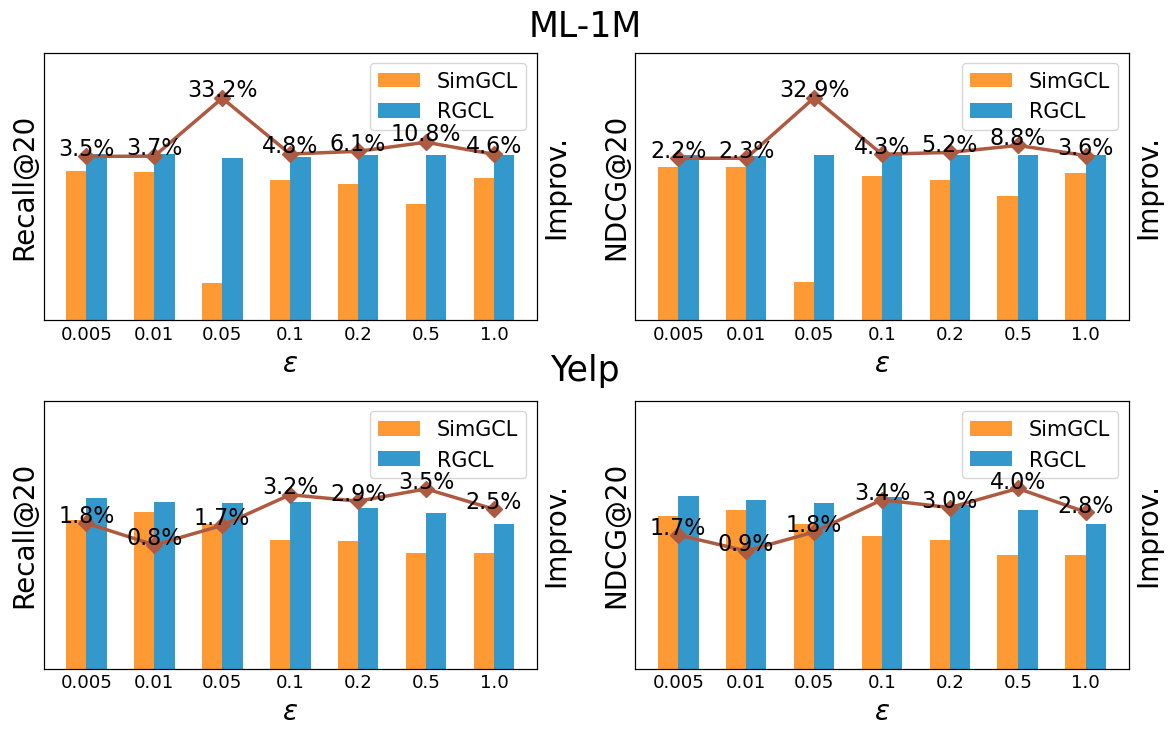} 
\caption{The model tolerance to hyper-parameter $\epsilon$ in terms of Recall@20 and NDCG@20 on ML-1M and Yelp datasets. The bars represent the accuracy metrics of different models (w.r.t. NDCG@20 and Recall@20), while the lines show the relative improvement of RGCL compared to SimGCL.} 
\label{fig:eps}
\end{figure}

\subsubsection{\textbf{Analysis of the model tolerance to hyper-parameter $\epsilon$}}

To validate the robustness of our method to perturbation hyper-parameter $\epsilon$, we conduct extensive experiments of performance comparison with SimGCL baseline with different values of $\epsilon$. Specifically, we set the search range as \{0.005,0.01,0.05,0.1,0.2,0.5,1.0\}. As shown in Figure~\ref{fig:eps}, we observe that SimGCL shows obvious performance fluctuations as $\epsilon$ changes. We speculate that the twofold reasons are the following: (1) different instances have different levels of intrinsic robustness. However, uniform and unconstrained perturbations may  potentially destroy the semantic structure for fragile instances, ultimately leading to erroneous contrastive views. (2) For instances with better intrinsic robustness, the hardness of contrastive examples is insufficient, hindering the full exploitation of contrastive learning. In contrast, our RGCL adopts decision boundary-aware perturbation constraints to guide the generation of both random and adversarial contrastive examples, leading to stable and superior performance. This demonstrates the insensitivity of RGCL to perturbation hyper-parameter $\epsilon$.

\subsubsection{\textbf{Impact of the coefficient $\alpha$}}
We change $\alpha$ to a set of predetermined representative values presented in Figure~\ref{fig:hyper}(a). We can see that the recommendation performance of RGCL gradually improves as $\alpha$ increases, which suggests that contrastive learning can facilitate the uniformity of node representation and learn high-quality features. Correlating with the results in Figure~\ref{fig:visual_item} and~\ref{fig:visual_user}, it also suggests that the personalized characteristic of low-degree users and items can be better captured by our algorithm.

\subsubsection{\textbf{Impact of the layer number $L$}}
To investigate the impact of the GNN layer number on model performance, we vary the hyper-parameter $L$ in the range $\{1,2,3\}$. From the Figure~\ref{fig:hyper}(b), We can observe that the performance trend of RGCL differs across different datasets. For example, for the ML-1M dataset, the over-smoothing issue occurs even with small value of $L$, while for the Yelp dataset, the model shows the significant performance improvement as graph layer number $L$ increases.

\subsubsection{\textbf{Impact of the temperature $\tau$}}
The temperature $\tau$ plays an important role in contrastive learning~\cite{wang2021understanding}. Figure~\ref{fig:hyper}(c) shows the impact of model performance \emph{w.r.t.} different $\tau$. We can see that the performance fluctuates severely as we use different $\tau$. Specifically, too large values of $\tau$ lead to poor performance, which is consistent with the previous work~\cite{wu2021self}. 
Conversely, too small temperature values also fail to achieve optimal model performance.
% which is attributed to the fact that as the $\tau$ decreases, contrastive learning has to distinguish hard examples, which closely resemble positive examples, from true examples.
One possible reason is that too small $\tau$ enforces the model to concentrate few hardest examples that dominate the optimization process, which is detrimental to achieve the satisfactory generalization ability.
Therefore, a suitable temperature is essential to maximize the benefits from graph contrastive learning. 

\textbf{More Analysis.}
To comprehensively evaluate the superiority of RGCL, we conduct more extensive experiments in Appendix to answer the following research questions: 
\begin{itemize}[leftmargin=*]
\item \textbf{RQ5:} What is the effect of RGCL on improving the representation uniformity of users and items? (\emph{cf.} Appendix~\ref{visualize_of_rep})
\item \textbf{RQ6:} How does the RGCL framework perform when applied to other GNN backbones? (\emph{cf.} Appendix~\ref{generalization_exp})
\item \textbf{RQ7:} How does RGCL maintain the semantic information of contrastive examples? (\emph{cf.} Appendix~\ref{case_appendix})
\end{itemize}

\balance
\section{Related Work}

\textit{\textbf{Graph Neural Network in Recommendation.}} In recent years, the application of GNN models in recommender systems has achieved remarkable success~\cite{wang2019neural, he2020lightgcn, berg2017graph, chen2020revisiting}. 
For example, NGCF~\cite{wang2019neural} models the higher-order connectivity in user-item graph by explicitly injecting collaborative signals into the embedding process.
Compared with NGCF, LightGCN~\cite{he2020lightgcn} simplifies the design of GCN by removing redundant feature transformation and nonlinear activation function. 
However, GNN-based recommenders suffer from the sparsity of user-item interactions. Although external data sources (\emph{e.g.}, multi-behavior data and knowledge graphs) help mitigate the above issue, obtaining such data is often challenging and even unavailable due to expensive cost or privacy protection. In contrast, graph contrastive learning, as an popular self-supervised learning paradigm, effectively overcomes the challenge of data sparsity.

\noindent
\textit{\textbf{GCL-based Recommendation Models.}} Graph contrastive learning (GCL) bridges the advantages of GNN models with contrastive learning, effectively alleviating recommendation bias and simultaneously modeling high-order connectivity. Generally, GCL methods can be classified into hardness-driven models and rationality-driven methods.
Specifically, for hardness-driven methods, their key task is to construct diverse and challenging augmented views.
For example, GraphCL~\cite{you2020graph} and SGL~\cite{wu2021self} both devises multiple heuristic strategy to generate different contrastive views, such edge dropout and feature masking.
However, these methods are prone to losing important semantic features since the augmentation operations are indeed unrelated to the downstream task yet simply based on human-designed experiences. 
% making it challenging for the recommendation models to learn the correct collaborative relationships.
In contrast, rationality-driven GCL methods alleviate the above issue by introducing slight feature perturbations to maintain semantic consistency, such as SimGCL~\cite{yu2022graph} and RocSE~\cite{ye2023towards}.
However, these methods still suffer from potential issues, such as insufficient contrastive hardness and tedious trial-and-error of hyper-parameter, resulting in suboptimal performance and poor flexibility. Compared with these methods, our method achieves a better balance between rationality and hardness of contrastive examples via well-designed decision boundary-aware perturbations and adversarial-contrastive view-generator.

\section{Conclusion}
In this paper, we propose a novel graph contrastive learning framework, named RGCL, aiming to strike a better trade-off between rationality and hardness for the contrastive view-generator. 
Specifically, we propose a decision boundary-aware perturbation constraints and relation-aware adversarial-contrastive augmentation to generate contrastive examples. Besides, RGCL generates adversarial examples based on the adversarial perturbations to achieve margin maximization between data points and the decision boundary, further improving the model robustness. 
Finally, we design a joint optimization objective to optimize model parameters.

\section*{Acknowledgments}
This work is supported in part by National Key R\&D Program of China (2023YFF0905402), National Natural Science Foundation of China (No. 62102420), Beijing Outstanding Young Scientist Program NO. BJJWZYJH012019100020098, Intelligent Social Governance Platform, Major Innovation \& Planning Interdisciplinary Platform for the ``DoubleFirst Class'' Initiative, Renmin University of China, Public Computing Cloud, Renmin University of China, fund for building world-class universities (disciplines) of Renmin University of China, Intelligent Social Governance Platform. The work is sponsored by KuaiShou Technology Programs (No. 2022020091).

\clearpage

\bibliographystyle{ACM-Reference-Format}
\bibliography{main}

\clearpage

\appendix
\section*{Content of Appendix}
\begin{enumerate}[label=\bfseries\Alph*,leftmargin=*]
    \item \hyperref[time_complexity]{Analysis of Training Time Complexity}
    \item \hyperref[theory]{Further Robustness ANALYSIS}
    \item \hyperref[experiment_details]{Experimental Details}
    \begin{enumerate}[label=\bfseries C.\arabic*,leftmargin=*]
        \item \hyperref[datasets]{Recommendation Datasets}
        \item \hyperref[baselines]{Baselines}
        \item \hyperref[implementation_details]{Implementation Details}
        \item \hyperref[group]{Details on User and Item Grouping}
        \item \hyperref[alg_process]{Learning Algorithm of RGCL}
    \end{enumerate}
    \item \hyperref[more_exp]{More Experiment Analysis}
    \begin{enumerate}[label=\bfseries D.\arabic*,leftmargin=*]
        \item \hyperref[visualize_of_rep]{Visualization of Representation (RQ5)}
        \item \hyperref[generalization_exp]{Generalization Evaluation (RQ6)}
        \item \hyperref[case_study]{Case Study (RQ7)}
    \end{enumerate}
\end{enumerate}

\section{Analysis of Training Time Complexity}\label{time_complexity}
The extra training time complexity of RGCL comes from the loss terms of contrastive and adversarial components. Suppose the number of nodes and edges are $|\mathcal{V}|$ and $|\mathcal{E}|$, respectively. Let $B$ denote the batch size, $d$ denote the embedding dimension, L denote the total layer number. We analyze the time complexity of each component as follows:
\begin{itemize}[leftmargin=*]
    \item \textbf{Original loss}. The time complexity of the original LightGCN model comes from adjacent matrix construction, graph convolution computation and BPR calculation.
    % \footnote{Here, we ignore the complexity of adjacent matrix computation since it is less significantly than that of the convolution computation module.}
    Their time complexities are $O(|\mathcal{E}|)$, $O(L|\mathcal{E}|d)$ and $O(Bd)$ respectively. Therefore, the total time complexity is $O((L|\mathcal{E}|+B)d)$.
    \item \textbf{Contrastive loss}. To begin with, solving for the perturbation constraints in contrastive learning needs one pass of forward and backward propagation, where the time complexity is $O(L|\mathcal{E}|d)$. Then, constructing two random-augmented views requires two pass of forward propagation. As for adversarial-contrastive view, it also needs extra one pass of forward and backward propagation, where the time complexity of the contrastive loss paradigm is $O(B^2d)$. Therefore, the total time complexity of the contrastive learning component is $O((L|\mathcal{E}|+B^2)d)$.
    \item \textbf{Adversarial loss}. The adversarial perturbations for generating adversarial examples has already been accounted in the contrastive loss part. Thus, in this part, we simply consider the time complexity of forward propagation and BPR loss, which are $O(L|\mathcal{E}|d)$ and $O(Bd)$, respectively. Therefore, the total time complexity of the adversarial loss is $O((L|\mathcal{E}|+B)d)$.
\end{itemize}

In summary, the total time complexity of the proposed RGCL is $O((L|\mathcal{E}|+B^2)d)$, which maintains the same order of time complexity as other graph contrastive learning algorithms~\cite{yu2022graph, ye2023towards}. However, the experimental results in Figure~\ref{fig:convergence} demonstrates that our algorithm has better converge and accuracy performance.

\section{Further Robustness ANALYSIS}\label{theory}
Inspired by previous work~\cite{neyshabur2017exploring,xia2022simgrace}, we provide the robustness analysis from the perspective of connections between sharpness of loss landscape and PAC-Bayes theory. Generally, smoother feature space can avoid large feature variations caused by input perturbations~\cite{wen2020towards}. Meanwhile, from the perspective of model optimization, flatter loss landscape can bring better model robustness. Specifically, 
assuming that the prior distribution $\mathcal{Q}$ over the model parameters, with probability at least $1-\xi$ over the draw of the training data, the expected error of $\mathcal{L}_{BPR}$ can be bounded as follows:
\begin{equation}
    \mathbb{E}_{\mathbf{\Delta}}\left[\widetilde{\mathcal{L}}_{BPR} \right] \leq \mathbb{E}_{\mathbf{\Delta}}\left[\mathcal{L}_{BPR}\right] + 4\sqrt{\frac{\mathrm{KL}(\bm{\theta} + \bm{\xi} \| \mathcal{Q}) + \ln \frac{2m}{\xi}}{m}},
\end{equation}
where $\widetilde{\mathcal{L}}_{BPR}$ represents the expected error, $m$ is the size of training data, $\mathbf{\Delta}$ denotes the perturbation of model parameter.
% The detailed proof can be found in~\textcolor{red}{\cite{xxx}}.
Then, we rewrite the above bound as follows:
\begin{equation}
    \begin{aligned}
        \mathbb{E}_{\mathbf{\Delta}}\left[\widetilde{\mathcal{L}}_{BPR} \right] \leq 
        \mathbb{E}\left[\mathcal{L}_{BPR}\right] 
        + \underbrace{\mathbb{E}_{\mathbf{\Delta}}\left[\mathcal{L}_{BPR}\right]
        - \mathbb{E}\left[\mathcal{L}_{BPR}\right]}_{\text{Expected sharpness}} \\
        + 4\sqrt{\frac{\mathrm{KL}(\bm{\theta} + \bm{\Delta} \| \mathcal{Q}) + \ln \frac{2m}{\xi}}{m}},
    \end{aligned}
\end{equation}
where expected sharpness $\mathbb{E}_{\mathbf{\Delta}}\left[\mathcal{L}_{BPR}\right]-\mathbb{E}\left[\mathcal{L}_{BPR}\right]$ demonstrates that our method aims to reduce the sensitivity to model parameter variations and increase the smoothness of the feature space. Therefore, the proposed perturbation-based augmentation examples can achieve more robust and well-generalized model performance.

\begin{table}[t]
    \renewcommand{\arraystretch}{1.1}
    \centering
    \caption{Statistics of the datasets.}
    \label{tab:dataset}
    \begin{tabular}{*{5}{c}} 
        \hline \hline
        \textbf{Dataset} & \textbf{\#Users} & \textbf{\#Items} & \textbf{\#Interactions} & \textbf{Sparsity} \\
        \hline
        ML-1M & 6,038 & 3,489 & 820,336 & 96.1059\% \\
        Alibaba & 12,265 & 6,145 & 193,120 & 99.7437\% \\
        Kuaishou & 2,457 & 1,042 & 35,795 & 98.6019\% \\
        Gowalla & 13,149 & 14,009 & 535,650 & 99.7092\% \\
        Yelp & 42,324 & 28,748 & 1,611,965 & 99.8675\% \\
        \hline \hline
    \end{tabular}
\end{table}

\section{EXPERIMENT DETAILS}\label{experiment_details}
\subsection{Recommendation Datasets}\label{datasets}
We conduct extensive experiments on the following five publicly available recommendation datasets in this paper:
(1) \textbf{MovieLens (ML)-1M\footnote{https://grouplens.org/datasets/movielens/}} is a widely adopted movie recommendation dataset, containing the one million movie ratings provided by users, ranging from 1 to 5 stars. (2) \textbf{Alibaba\footnote{https://github.com/wenyuer/POG}} is a fashion-related dataset and provides user behaviors related to both outfits and fashion items. (3) \textbf{Kuaishou\footnote{https://kuairand.com/}} contains user interactions on exposed short videos, collected from the video-sharing mobile App. (4) \textbf{Gowalla\footnote{https://snap.stanford.edu/data/loc-gowalla.html}} is a checking-in dataset for item recommendation, collected from a location-based social networking website. (5) \textbf{Yelp\footnote{https://www.kaggle.com/datasets/yelp-dataset/yelp-dataset/versions/2?resource=download}} is a widely-used business recommendation dataset collected from yelp website, where the business venues of users are viewed as the items.

To transform the explicit user ratings into implicit interaction behavior, the interactions with ratings above three are viewed as the positive example for rating-based datasets (\emph{i.e.}, ML-1M and Yelp). For Yelp and Gowalla datasets, we filter users and items that have less than fifteen interaction number to ensure the data quality. For all datasets, we randomly divide the data into training set, validation set and testing set using a ratio of 8:1:1. For negative samples used in BPR objective, we uniformly sample one negative item for each positive interaction. The overall experiments are repeated five times with different initialized seeds for significance test of model performance. The statistics of the five recommendation datasets are shown in Table~\ref{tab:dataset}.

\subsection{Baselines}\label{baselines}

\underline{\textbf{Traditional Recommenders:}}
\begin{itemize}[leftmargin=*]
    \item \textbf{BPRMF}~\cite{rendle2012bpr} is a well known matrix factorization model by optimizing BPR loss function.
    \item \textbf{NeuMF}~\cite{he2017neural} is a deep recommendation model, which aims to capture the non-linear correlations between users and items.
\end{itemize}

\underline{\textbf{GNN-based Recommenders:}}
\begin{itemize}[leftmargin=*]
    \item \textbf{GCMC}~\cite{berg2017graph} is a graph auto-encoder framework to learn complex patterns and dependencies within the user-item interaction graph by differentiable message passing.
    \item \textbf{NGCF}~\cite{wang2019neural} is a collaborative filtering model that integrates interactions of user-item bipartite into the embedding process for modeling high-order connectivity. 
    \item \textbf{GCCF}~\cite{chen2020revisiting} is a linear graph recommendation model, which alleviates the over smoothing problem by removing non-linearity and introducing the residual network structure.
    \item \textbf{LightGCN}~\cite{he2020lightgcn} is a graph-based recommender model, which enhances the collaborative filtering information by abandoning the feature transformation and nonlinear activation.
\end{itemize}

\underline{\textbf{GCL-based Recommenders:}}
\begin{itemize}[leftmargin=*]
    \item \textbf{GraphCL}~\cite{you2020graph} is a graph contrastive learning framework, which designs various types of graph augmentations to incorporate transformation randomness (\emph{e.g.}, attribute masking). 
    \item \textbf{SGL}~\cite{wu2021self} is a self-supervised learning method based on user-item bipartite interaction graph, which devises three augmentation strategies, \emph{aka.}, node dropout, edge dropout and random walk.
    \item \textbf{LightGCL}~\cite{cai2023lightgcl} is a simple graph contrastive paradigm that utilizes the SVD for contrastive augmentation to integrate the global collaborative relation without structural refinement.
    \item \textbf{RocSE}~\cite{ye2023towards} is a robust graph collaborative filtering model, which adds in-distribution perturbation to construct a contrastive view-generator, which mimicking the behaviors of adversarial attacks.
    \item \textbf{CGI}~\cite{wei2022contrastive} is a graph contrastive model by designing learnable graph augmentation to adaptively learn whether to drop an edge or node and leveraging the information bottleneck technique to guide contrastive learning process.
    \item \textbf{SimGCL}~\cite{yu2022graph} is a GCL-based recommendation model, which discards the sophisticated graph augmentation and adopts to add uniform noises to the embedding space as contrastive views.
\end{itemize}

\subsection{Implementation Details}\label{implementation_details}
We implement our RGCL with PyTorch~\cite{paszke2019pytorch} framework. For fair comparison, all models are initialized with the Xavier method~\cite{glorot2010understanding} and optimized by the Adam optimizer~{\cite{kingma2014adam}. All hyper-parameters of baseline models are searched following suggestions from the original papers. The batch size and embedding dimension are fixed to 4,096 and 64, respectively. The learning rate is searched from $\{0.0005, 0.001, 0.005, 0.01, 0.05\}$. The layer number of graph neural network is searched from $\{1,2,3\}$. We set $\mu=0.1$ in Equation~(\ref{final_loss}). The loss weight $\alpha$ is tuned from $\{1e-5,5e-5,\dots,1e-2\}$. The initial hyper-parameter used for perturbation magnitude is chosen from $\{0.005,0.01,\cdots,1.0\}$. The search range of temperature coefficient $\tau$ is $\{0.05,0.1,0.2,0.5,1.0,5.0,10.0\}$. Early stopping is utilized as the convergence criterion. Specifically, we evaluate the performance on the validation dataset for each epoch, and stop the training process once there is no accuracy improvement for 10 consecutive epochs.

\begin{table*}[t]
    \renewcommand{\arraystretch}{1.1}
    \centering
    \caption{Generalization evaluation on different GNN-based backbones.}
    \label{tab:generalization}
    \resizebox{1\textwidth}{!}{
    \begin{tabular}{c*{6}{c}|*{6}{c}} 
        \hline \hline
        \multirow{2}{*}{Model} & \multicolumn{6}{c|}{ML-1M} & \multicolumn{6}{c}{Yelp} \\
         & R@10 & N@10 & R@20 & N@20 & R@50 & N@50 & R@10 & N@10 & R@20 & N@20 & R@50 & N@50 \\
        \hline
        GCMC & 0.1676 & 0.2480 & 0.2526 & 0.2551 & 0.4073 & 0.2985 & 0.0520 & 0.0400 & 0.0867 & 0.0520 & 0.1623 & 0.0740 \\
        GCMC + RGCL & \textbf{0.1807} & \textbf{0.2608} & \textbf{0.2714} & \textbf{0.2707} & \textbf{0.4351} & \textbf{0.3176} & \textbf{0.0596} & \textbf{0.0463} & \textbf{0.0980} & \textbf{0.0596} & \textbf{0.1802} & \textbf{0.0835} \\
        Improv. & +7.86\% & +5.15\% & +7.42\% & +6.11\% & +6.82\% & +6.42\% & +14.60\% & +15.65\% & +13.02\% & +14.44\% & +11.03\% & +12.82\% \\
        \hline
        NGCF & 0.1763 & 0.2544 & 0.2673 & 0.2647 & 0.4297 & 0.3121 & 0.0506 & 0.0390 & 0.0842 & 0.0507 & 0.1570 & 0.0718 \\
        NGCF + RGCL & \textbf{0.1813} & \textbf{0.2565} & \textbf{0.2744} & \textbf{0.2683} & \textbf{0.4378} & \textbf{0.3165} & \textbf{0.0530} & \textbf{0.0405} & \textbf{0.0878} & \textbf{0.0526} & \textbf{0.1662} & \textbf{0.0752} \\
        Improv. & +2.83\% & +0.81\% & +2.67\% & +1.36\% & +1.89\% & +1.41\% & +4.87\% & +3.86\% & +4.23\% & +3.71\% & +5.82\% & +4.72\%  \\
        \hline
        GCCF & 0.1753 & 0.2624 & 0.2611 & 0.2677 & 0.4171 & 0.3109 & 0.0512 & 0.0399 & 0.0851 & 0.0517 & 0.1582 & 0.0730  \\
        GCCF + RGCL & \textbf{0.1838} & \textbf{0.2679} & \textbf{0.2722} & \textbf{0.2747} & \textbf{0.4315} & \textbf{0.3195} & \textbf{0.0575} & \textbf{0.0451} & \textbf{0.0937} & \textbf{0.0576} & \textbf{0.1701} & \textbf{0.0798} \\
        Improv. & +4.84\% & +2.09\% & +4.25\% & +2.61\% & +3.47\% & +2.76\% & +12.34\% & +12.98\% & +10.15\% & +11.49\% & +7.54\% & +9.32\% \\
        \hline
        LightGCN & 0.1774 & 0.2581 & 0.2680 & 0.2670 & 0.4310 & 0.3137 & 0.0612 & 0.0479 & 0.1001 & 0.0614 & 0.1814 & 0.0850 \\
        LightGCN + RGCL & \textbf{0.1934} & \textbf{0.2694} & \textbf{0.2901} & \textbf{0.2821} & \textbf{0.4581} & \textbf{0.3321} & \textbf{0.0753} & \textbf{0.0591} & \textbf{0.1191} & \textbf{0.0744} & \textbf{0.2108} & \textbf{0.1010} \\
        Improv. & +9.02\% & +4.39\% & +8.26\% & +5.65\% & +6.29\% & +5.86\% & +22.89\% & +23.39\% & +19.05\% & +21.19\% & +16.20\% & +18.84\% \\
        \hline \hline
    \end{tabular}}
\end{table*}

\renewcommand{\algorithmicensure}{\textbf{Output:}}
\renewcommand{\algorithmicrequire}{\textbf{Input:}}
\algnewcommand{\algorithmicparameter}{\textbf{Parameter:}}
\algnewcommand{\Parameter}{\item[\algorithmicparameter]}
\begin{algorithm}
\caption{Learning Algorithm of RGCL}
\label{alg}
\begin{algorithmic}[1]
\Require User-item bipartite graph $\mathcal{G} = \{\mathcal{V}, \mathbf{A}\}$, adversarial loss weight $\mu$, contrastive loss weight $\alpha$, initialized perturbation magnitude $\epsilon$, temperature coefficient $\tau$, layer number $K$, batch size $B$, learning rate $lr$;
\Parameter Learnable parameters $\bm{\theta}=\mathbf{E}$,
% Hyper-parameter $\mu, \alpha, \epsilon, \tau, K, B$;
\Ensure RGCL Model;
% \Statex
\While{Model Not Convergence}
    \Statex \hspace*{1.3em} // Calculate the decision boundary-aware perturbation
    \State Calculate the perturbation $\mathbf{\Delta}_u^{(k)},\mathbf{\Delta}_i^{(k)}$ using Eq.~(\ref{max_pert_vec});
    \Statex \hspace*{1.3em} // Calculate the contrastive loss
    \State Generate perturbation-constrained random views $\mathbf{z}_u^{\prime}$,  $\mathbf{z}_u^{\prime\prime}$, $\mathbf{z}_i^{\prime}$,  $\mathbf{z}_i^{\prime\prime}$ using Eq.~(\ref{max_pert_vec}) and (\ref{random_aug}); 
    % \State Constrain $\mathbf{z}_u^{\prime}$,  $\mathbf{z}_u^{\prime\prime}$, $\mathbf{z}_i^{\prime}$,  $\mathbf{z}_i^{\prime\prime}$ with $\mathbf{\Delta}_u^{(k)},\mathbf{\Delta}_i^{(k)}$ using Eq.~(\ref{});
    \State Generate relation-aware adversarial-contrastive views $\mathbf{z}_u^{ac}, \mathbf{z}_i^{ac}$ using Eq.~(\ref{eta_fgsm}) and (\ref{ac_view});
    \State Calculate multi-view contrastive loss $\mathcal{L}_{CL}$ using Eq.~(\ref{multi_view});
    \Statex \hspace*{1.3em} // Calculate the adversarial loss
    \State Generate adversarial examples $\mathbf{z}_u^{adv}$ and $\mathbf{z}_i^{adv}$ using Eq.~(\ref{adv_example});
    \State Calculate adversarial loss $\mathcal{L}_{ADV}$ using Eq.~(\ref{adv_loss});
    \Statex \hspace*{1.3em} // Calculate the BPR loss;
    \State Calculate the BPR loss $\mathcal{L}_{BPR}$ using Eq.~(\ref{bpr});
    \Statex \hspace*{1.3em} // Model optimization
    \State Calculate total loss $\mathcal{L}$ using Eq.~(\ref{final_loss});
    \State Update model parameter $\bm{\theta}$ using SGD;
\EndWhile
\State \Return $\bm{\theta}$;
\end{algorithmic}
\end{algorithm}

\subsection{Details on User and Item Grouping}\label{group}
In the following, we provide the specific details of partitioning the user and item groups in Experiment~\ref{RQ3}:
\begin{itemize}[leftmargin=*]
\item \textbf{USER}: we split all users into five groups based on the number of user interaction while keeping the total number of each user group the same, which are denoted as $[G_0,G_1,G_2,G_3,G_4]$ in ascending order of interaction count.
\item \textbf{ITEM}: we group all items based on their popularity into five groups and similarly, we keep the total number of each item group the same. Specifically, we adopt the decomposed Recall and NDCG metrics defined as follows:
\begin{gather*}
    \text{Recall}(\text{G}_i)=\frac{1}{M}\sum_{u\in \mathcal{U}}\frac{|\hat{l}_u \cap l_u^{\text{G}_i}|}{|\hat{l}_u|}, \\
    \text{NDCG}(\text{G}_i) = \frac{1}{M} \sum_{u\in\mathcal{U}} \frac{\sum_{j=1}^{|\hat{l}_u|}\mathbb{I}(\hat{l}_u(j)\in l_u^{\text{G}_i})(\log_2(j+1))^{-1}}{\sum_{t=1}^{|\hat{l}_u|} (\log_2(t+1))^{-1}},
\end{gather*}
where  $\hat{l}_u$ and $l_u$ represent the predicted and real Top-N recommendation list of user $u$, respectively, and $\mathbb{I}(\cdot)$ is the indication function. We use $l_u^{\text{G}_i}$ to denote the item recommendation list within the group $\text{G}_i$. Here, we set $|\hat{l}_u|=\min(|l_u|,K)$.
\end{itemize}

\begin{figure}[t]
\centering
\includegraphics[width=\linewidth]{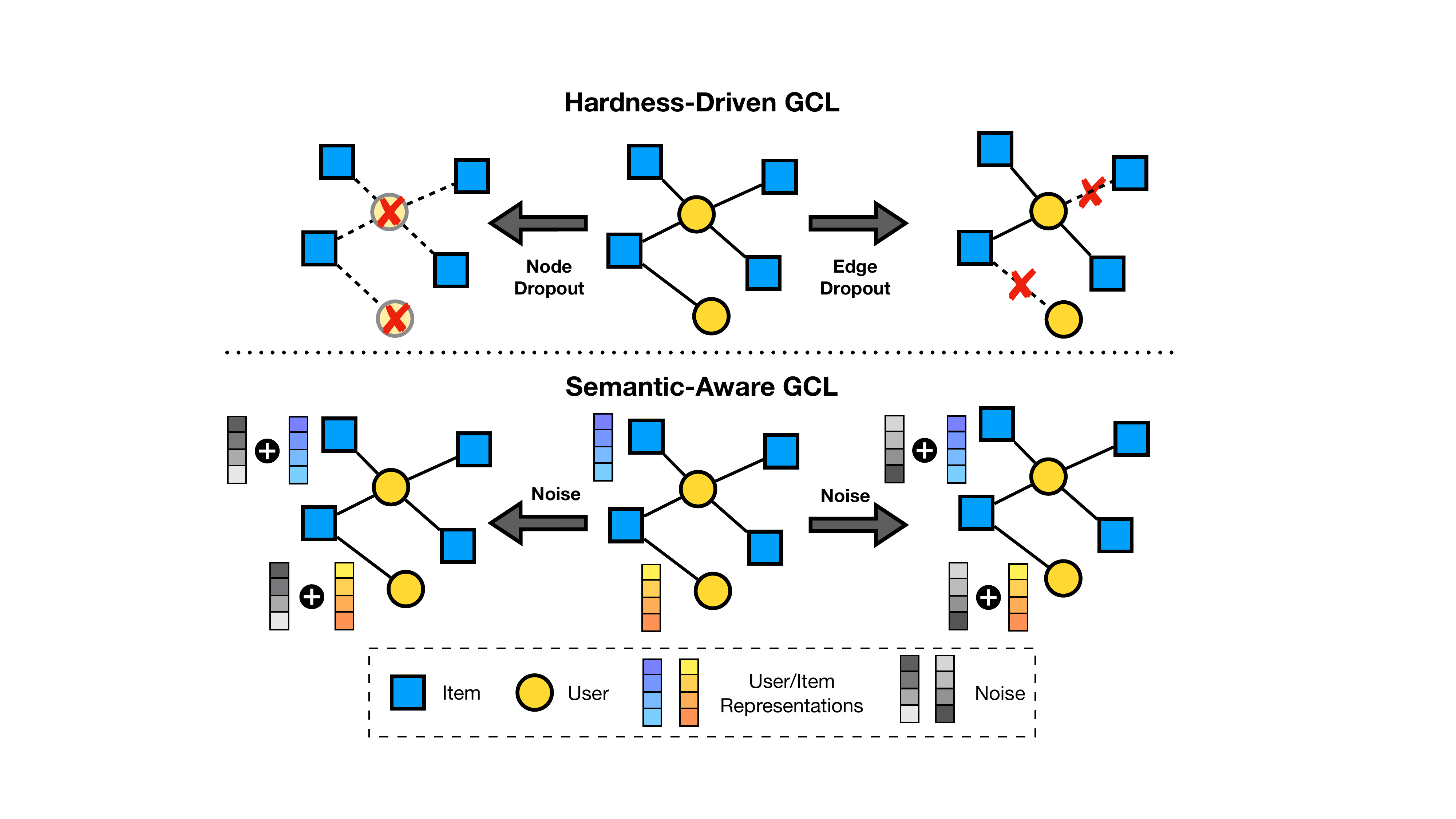} 
\caption{Visualization of item representation and degree on ML-1M and Yelp datasets. Darker colors indicate more points falling within the region.} 
\label{fig:visual_item}
% \vspace{-0.5cm}
\end{figure}

\subsection{Learning Algorithm of RGCL}\label{alg_process}
The overall learning algorithm of the proposed RGCL framework is summarized in Algorithm~\ref{alg}.

\section{More Experimental Analysis}\label{more_exp}

\subsection{Visualization of Representation (RQ4)}\label{visualize_of_rep}
To better understand how RGCL promotes the uniformity of representations for preserving personalized node information, we visualize the learned item embeddings and user embeddings in Figure~\ref{fig:visual_item} and Figure~\ref{fig:visual_user}, respectively.
Specifically, we firstly map the learned node representations to 2-dimensional normalized vectors using t-SNE~\cite{van2008visualizing}. Then, we use Kernel Density Estimation (KDE)~\cite{botev2010kernel} to visualize the distribution of transformed feature representations. Moreover, for a clearer demonstration, we also visualize the density estimations of their angles, where angles are calculated using the function: $arctan2(y,x)$ for each instance $(x,y)$.
We can observe our RGCL shows a better uniform distribution on both users and items. This shows that RGCL can effectively learn high-quality representations by avoiding the bias caused by the dominance of advantaged users and items. Besides, correlating with the results in Table~\ref{tab:overall}, RGCL achieves a win-win breakthrough in representation uniformization and performance improvement compared other baselines, suggesting the superiority of our designs.

\begin{figure}[t]
\centering
\includegraphics[width=\linewidth]{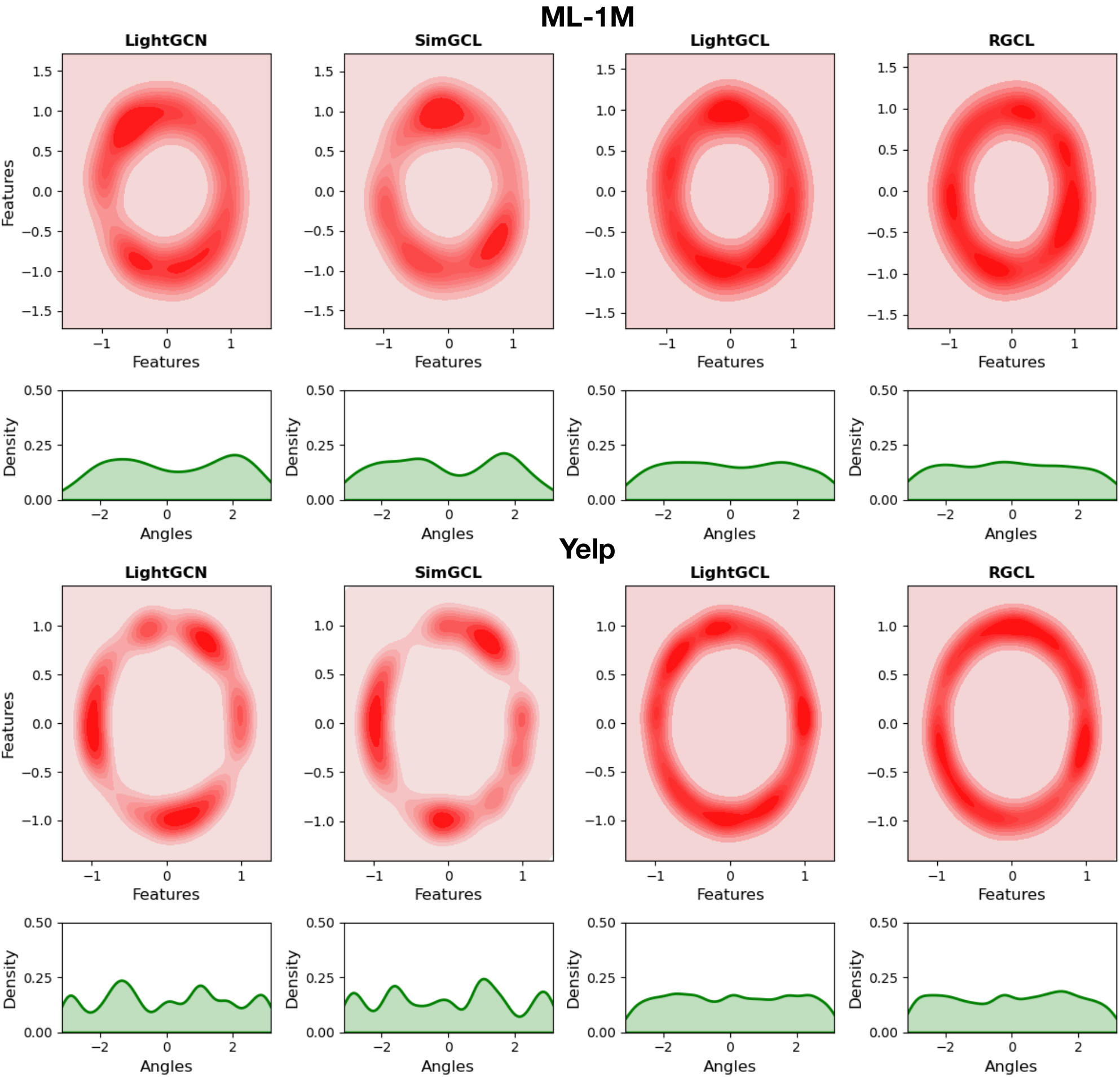} 
\caption{Visualization of user representation and degree on ML-1M and Yelp datasets. Darker colors indicate more points falling within the region.} 
\label{fig:visual_user}
\end{figure}

\subsection{Generalization Evaluation (RQ5)}\label{generalization_exp}
To verify the generalization of our proposed model-agnostic framework, we employ RGCL framework on three other commonly used GNN-based backbones, \emph{i.e.}, GCMC~\cite{berg2017graph}, NGCF~\cite{wang2019neural} and GCCF~\cite{chen2020revisiting}. 
We summarize the experimental results in Table~\ref{tab:generalization}. 
From the table, we can see that RGCL generalizes well across different GNN-based backbones, further demonstrating the effectiveness and flexibility of our method. Additionally, the improvement based on the NGCF backbone is not significant, which we attribute to the redundant weight parameters and unnecessary nonlinear feature transformations of NGCF model, thus posing challenges to the model learning.

\subsection{Case Study (RQ7)}\label{case_appendix}
In this section, we present a case study to intuitively show the effectiveness of our model to preserve the important semantic information of recommendation task. From the Figure~\ref{fig:case_ml}, we can observe that user \#315 prefers horror, action, and science fiction movies while showing less interest in comedy movies. Comparing the SimGCL and RGCL methods, although both original ranking results attain the correct ordering preferences for positive items and negative items, the introduction of noise perturbation for SimGCL baseline leads to a reversal in the predicted scores for movies \#757 (liked movie) and movie \#642 (disliked movie). It indicates that SimGCL baseline cannot reasonably control perturbations to preserve task-relevant information, resulting in irrational contrastive samples. In contrast, our proposed RGCL generates rational contrastive pairs and thus effectively improves model robustness and recommendation performance.

\begin{figure}[t]
\centering
\includegraphics[width=\linewidth]{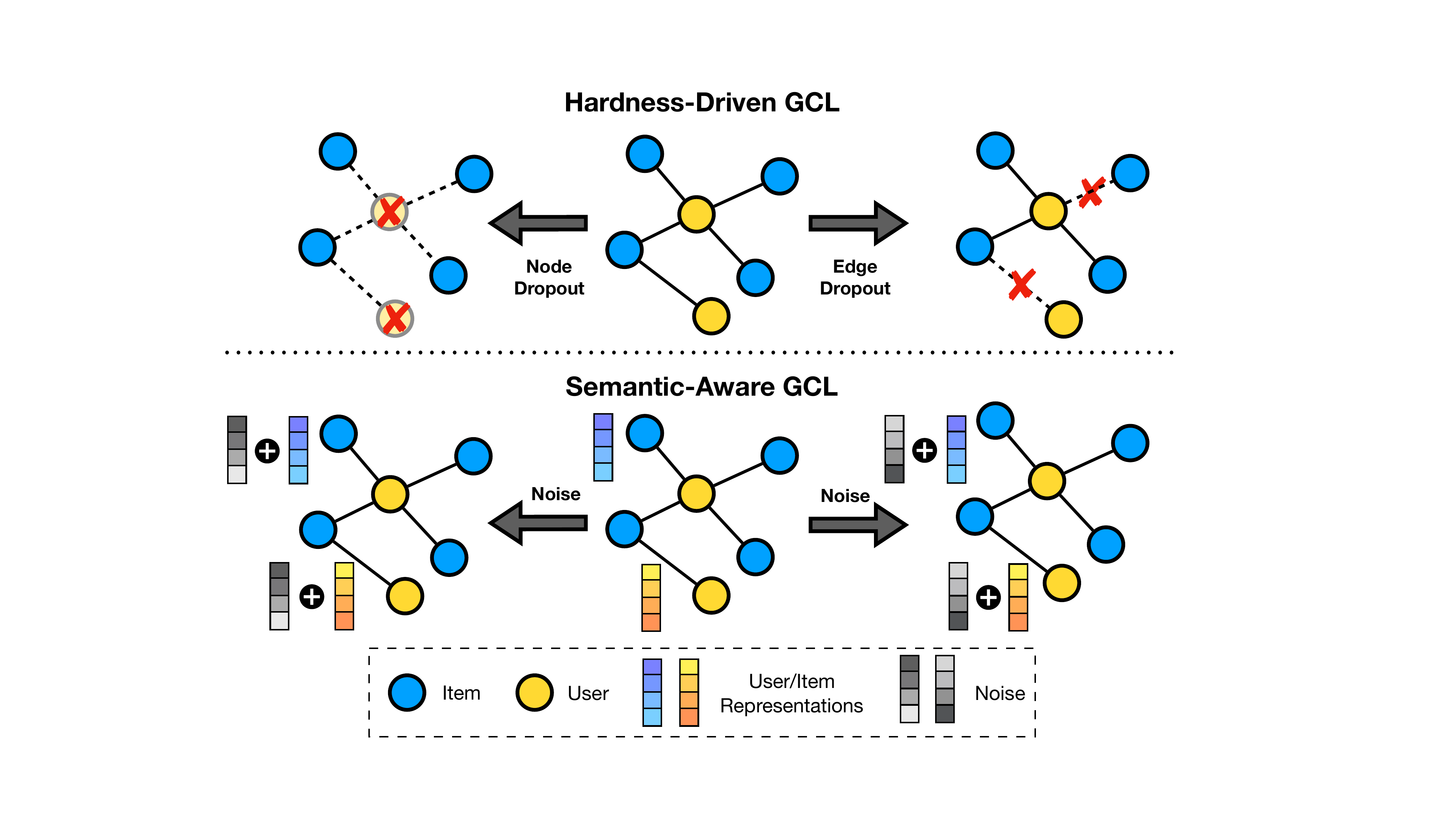} 
\caption{Case study on ML-1M dataset. The "Score (Origin.)" and "Score (Pert)" indicate predicted scores based on the original and contrastive augmented user and item embeddings, respectively. Best viewed in color.} 
\label{fig:case_ml}
\end{figure}

\end{document}